\newtheorem{definition}{Definition}
\newtheorem{lemma}{Lemma}
\newtheorem{theorem}{Theorem}
\begin{document}

\title{Space-Invariant Projection in \\ Streaming Network Embedding}

\author{Yanwen~Zhang,
        Huiwen~Wang,
        and~Jichang~Zhao
\IEEEcompsocitemizethanks{\IEEEcompsocthanksitem Yanwen Zhang is with the School of Economics and Management, Beihang University, Beijing, China and also with the Shen Yuan Honors College, Beihang University, Beijing 100191, China. \protect\\
E-mail:zhangyanwen@buaa.edu.cn

\IEEEcompsocthanksitem Huiwen Wang is with the School of Economics and Management, Beihang University, Beijing, China, and also with Key Laboratory of Complex System Analysis, Management and Decision (Beihang University), Ministry of Education.\protect\\
E-mail: wanghw@vip.sina.com

\IEEEcompsocthanksitem Jichang Zhao is with the School of Economics and Management, Beihang University, Beijing, China and also with Beijing Key Laboratory of Emergence Support Simulation Technologies for City Operations, Beijing, China.\protect\\
E-mail: jichang@buaa.edu.cn}
\thanks{Manuscript received *** **, 2023; revised *** **, 2023.\protect\\
(Corresponding author: Jichang Zhao)}}

\markboth{Journal of \LaTeX\ Class Files,~Vol.~**, No.~*, ***~2023}%
{Shell \MakeLowercase{\textit{Zhand et al.}}: Space-Invariant Projection in Streaming Network Embedding}

\IEEEtitleabstractindextext{%
\begin{abstract}
Newly arriving nodes in dynamics networks would gradually make the node embedding space drifted and the retraining of node embedding and downstream models indispensable. An exact threshold size of these new nodes, below which the node embedding space will be predicatively maintained, however, is rarely considered in either theory or experiment. From the view of matrix perturbation theory, a threshold of the maximum number of new nodes that keep the node embedding space approximately equivalent is analytically provided and empirically validated. It is therefore theoretically guaranteed that as the size of newly arriving nodes is below this threshold, embeddings of these new nodes can be quickly derived from embeddings of original nodes. A generation framework, Space-Invariant Projection (SIP), is accordingly proposed to enables arbitrary static MF-based embedding schemes to embed new nodes in dynamics networks fast. The time complexity of SIP is linear with the network size. By combining SIP with four state-of-the-art MF-based schemes, we show that SIP exhibits not only wide adaptability but also strong empirical performance in terms of efficiency and efficacy on the node classification task in three real datasets.

\end{abstract}

\begin{IEEEkeywords}
Dynamic network embedding, node embedding, matrix factorization, matrix perturbation theory, node classification.
\end{IEEEkeywords}}

\maketitle

\IEEEdisplaynontitleabstractindextext

%
\IEEEpeerreviewmaketitle

\section{Introduction}
\label{sec:introduction}

\IEEEPARstart{N}{etwork} embedding, or network representation learning, has gathered much research attention due to its superior performance in downstream tasks, such as node classification \cite{xu2019network,liu2021relative}, link prediction \cite{kazemi2018simple, rossi2021knowledge} and clustering \cite{han2019structured, guo2022graph}. The very goal of network embedding is to construct a low dimensional latent space that is able to capture high order proximities of the original graph. Two mainstream solutions of this task are matrix factorization (MF)-based \cite{liu2019general} and neural network (NN)-based methods \cite{zhou2022network}. From the very recent, the need for dynamic network embedding (DNE) in real life becomes noticeable, thus a series of methodologies proposed to solve this issue \cite{xue2022dynamic, barros2021survey}. As for the scene with new vertices arriving, some inductive methodologies based on graph neural networks were also developed \cite{hamilton2017inductive,liu2018semi,chu2019inductive,liu2019real,perini2022learning}.\par

Even though the effect of NN-based embedding is superior \cite{zhou2022network}, the importance of MF-based methods keeps irreplaceable. Firstly, the strategy of using sampling  to accomplish node proximity preservation makes NN-based schemes hardly utilize the whole network information, and also unstable due to different instantiations\cite{wang2020towards}. Secondly, many NN-based network embedding schemes have been proved to be uniformly correlated with the explicit matrix factorization \cite{qiu2018network, liu2019general, peng2022svd, agibetov2023neural}, and the development of both scientific computing and sparse matrix factorization accelerate and guarantee the feasibility of matrix factorization. Thirdly, MF-based methodologies are easily to be explained, due to the full development of theories concerning matrix and its perturbation.\par



 As for the DNE issue with new nodes adding, the drawback of NN-based solutions is also obvious. When extra features are needed to retrain models, the time complexity will be high, and is not affordable for streaming setting where quick embeddings are inherently necessary \cite{qi2020discriminative}. Thus a MF-based methodology with higher efficiency is in need for this scene. Even though some classic MF-based DNE methodologies mentioned that the incremental update using the change of adjacent matrix, denoting as $\Delta \bm{A}$, to update embedding, is able to be extended to generating new embedding by asserting zeros columns and lines in advance \cite{li2017attributed, wang2020dynamic, zhu2018high}, two inevitable drawbacks are involved in this strategy. At first, once it comes to using $\Delta \bm{A}$ to measure the change of networks, the number of new come vertices, $m$, should be determined in advance. Then, the threshold of accumulated error for incremental update that trigger restart is a hyper-parameter to determine \cite{zhang2018timers}, making the strategy experiment driven instead of data driven.

\begin{figure}[htbp]
    \centering
    \includegraphics[width=\linewidth]{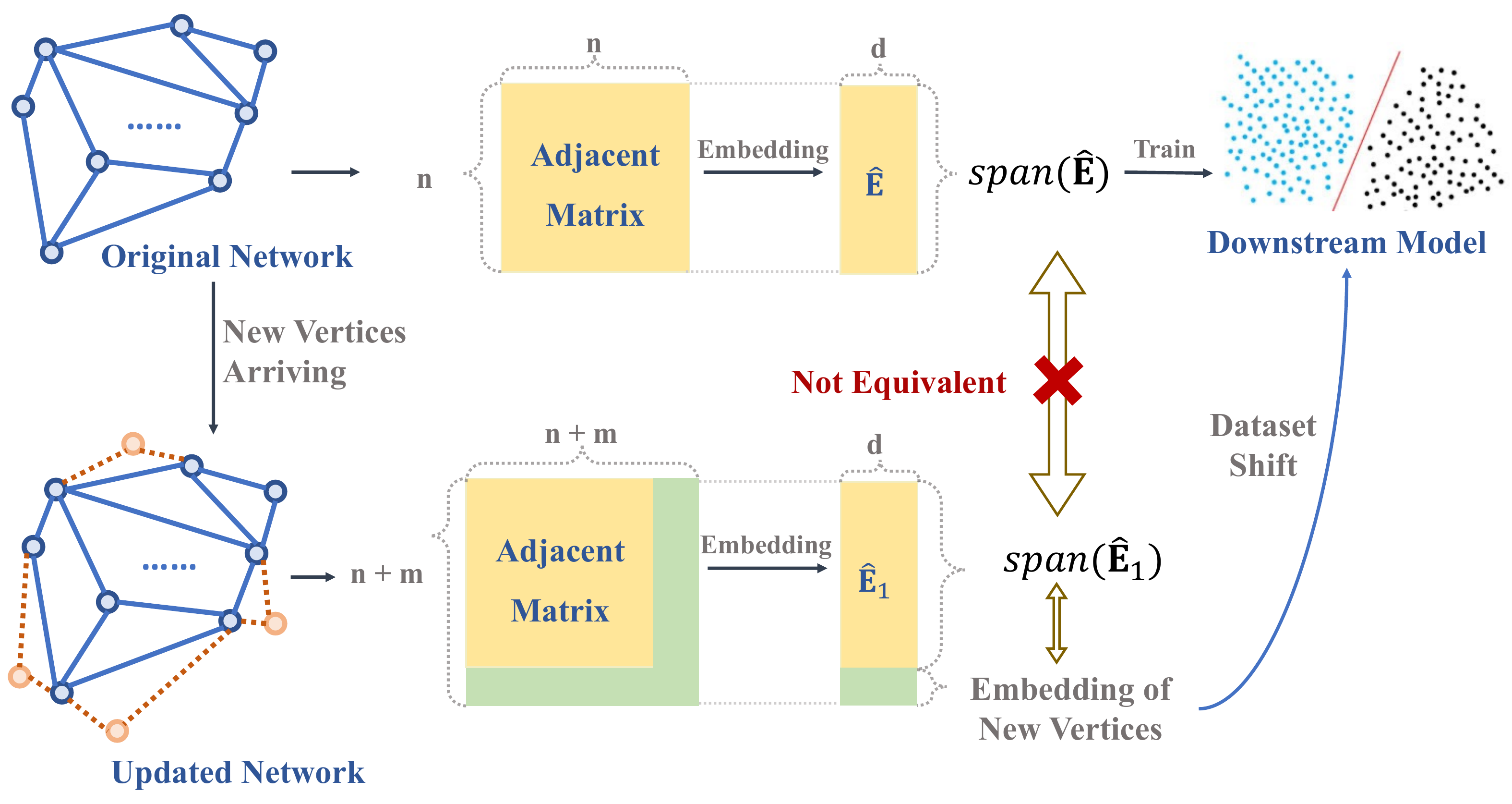}
    \caption{The illustration of embedding space drift and its consequence concerning downstream models.}
    \label{intro_fig}
\end{figure}

 In order to fix problems above, we should thoroughly understand the change of embedding space caused by new vertices arriving in advance. As illustrated in Fig.\ref{intro_fig}, let $\hat{\bm{E}}$ be the embedding of a graph with $n$ vertices, and $\hat{\bm{E}}_1$ is the embedding of corresponding $n$ vertices after $m$ more nodes arrived and representations regained. The ground truth embedding space of the updated network is acquired through retraining. With the increase of $m$, spaces spanned by columns of $\hat{\bm{E}}$ and $\hat{\bm{E}}_1$, denoted respectively as $span\{\hat{\bm{E}}\}$ and $span\{\hat{\bm{E}}_1\}$, will eventually become hardly equivalent, which cause the figures of each embedding, i.e. each corresponding row of $\bm{E}$ and $\hat{\bm{E}}_1$, become gradually differentiated. The drift of space inevitably deteriorate the performance of the downstream model, thus a problem similar to dataset shift \cite{quinonero2008dataset} will occur, even though the shifts here are resulted from the embedding vectors rather than the sample distribution. \par 

Thus a natural question, which has been rarely explored theoretically, is that what size of these newly arrived nodes will eventually undermine the original embedding space and make the restart for both embedding and the downstream model necessary. A theoretical solution of this threshold size $m$ would be critical in real applications, since it will help avoid the unnecessary embedding retraining. It also worthy noting that $m$ here is totally determined by the network structure, the hyper-parameter error bound is no longer an input.

Thereafter, as the size of newly arrived nodes is smaller than $m$, space drift is acceptable and no needs to retain the embedding model. In line with this, we propose a novel projection embedding model inspired by theoretical analysis to handle the streaming network embedding issue containing newly introduced nodes. Actually, for new nodes entering the network without additional information, if they are interacted with existing nodes, we can reasonably generalize the characteristics learned from existing vertices to them, just as the common solution to the ubiquitous cold-start problem, e.g. utilizeing links in social networks to obtain user preferences \cite{zhang2020joint}. 
The time complexity of proposed solution is linearly to the node size. \par

To verify the advantages of our framework, we combine it with four outstanding MF-based static schemes, and conduct node classification experiments for new arrived nodes on three real-world networks respectively. The result demonstrates the outperformance of our framework in efficacy, efficiency and adaptation for different schemes. 

The main contributions of this paper are summarized as follows:\par

\begin{itemize}
\item For the MF-based dynamic network embedding problem with nodes arriving, we propose a criteria for judging when to restart, which is able to calculate the threshold of the number of new nodes that above which both the embedding and downstream model need to be retrained.
\item For new nodes, different from the idea of incremental update proposed by the previous methods to obtain embedding, we propose a fast projection framework, SIP, which can be combined with different MF-based methods to construct embedding. The time complexity is linear to the network size.
\item The effecacy and efficiency of both threshold theory and projection embedding are validate on four MF-based static embedding schemes using three real-world networks.

\end{itemize}
The rest of the paper is organized as follows. In section 2, literature review is presented. Section 3 defines the problem, and in section 4, all theoretical formulations including the deduction of $m$ and embedding generation framework are detailedly elaborated, with four instantions of SIP presented. We then focus on node classification task for new arriving nodes in section 5, and evaluates our performance thoroughly. The last section is about conclusion and discussion.

\section{Related Work}

\subsection{Static Network Embedding}

To learn the topology information and preserve the node proximity of static networks, traditional dimension reduction approaches using matrix factorizaiton, like Graph Laplacian Eigenmaps (LE)\cite{tenenbaum2000global,belkin2003laplacian} and non-negative factorization (NMF) \cite{sun2014alternating, ye2018deep} are adopted as classic embedding methods. Recetly, GLEE \cite{torres2020glee} innovatively promoted the LE to capture geometric similarities, which outperformed LE in many aspects. Factorization on a high-order proximity matrix is also a common idea for node embedding. HOPE \cite{ou2016asymmetric} leveraged a generalized Singular Value Decomposition (SVD) to factorize proximity-preserved matrices constructed based on different metrics. AROPE \cite{zhang2018arbitrary} proposed an efficient scheme to preserve arbitrary-order proximities based on the real requirment. GraRep \cite{cao2015grarep} factorized the top k-order transition matrices and concatenating the vectors together, which effectively captured the global information of nodes, and outperformed other state-of-the-art methods in specific tasks.\par 

Moreover, DeepWalk \cite{perozzi2014deepwalk} firstly adopted SkipGram-alike model \cite{mikolov2013distributed}, treating nodes as words and random walks as sentences, the output of which were node embeddings. Node2vec \cite{grover2016node2vec} introduced biased random walk with hyper parameters, and role2vec \cite{ahmed2020role} introduced attributed random walk to extend this work. These two methods extended DeepWalk to adopt different scenarios. While in reality, when the edge weights have a high variance, the gradient explosion or disappear will occur, thus LINE \cite{tang2015line} was proposed to address this using elaborately designed objective functions that preserve the first-order or second-order proximity. Based on the pioneering work demonstrating the equivalence between skip-gram model and matrix factorization \cite{levy2014neural}, NetMF \cite{qiu2018network} created an intrinsic connection between DeepWalk, LINE, PTE \cite{tang2015pte} (an extension of second-order LINE in heterogeneous text network), and node2vec, thus these time-consuming skip-gram based network embedding models are unified into matrix factorization.

In order to conquer the high embedding time complexity on large scale graphs, many upgrades were invented recently. RandNE \cite{zhang2018billion} implemented Gaussian random projection to optimize the matrix factorization object functions, which is parallelly computable thus time-saving. NodeSketch \cite{yang2019nodesketch} preserved node proximity using recursive sketching, which generated node embedding based on the self-loop-augmented adjacent matrix.\par

\subsection{Dynamic Network Embedding}

Due to the dynamics of most real-world networks, to embed networks evolving over time becomes more impelling.\par 

For NN-based methods, at first, some attempts were paid to extend existed embedding methodologies to the dynamic setting \cite{mahdavi2018dynnode2vec, peng2020dynamic, sajjad2019efficient, barracchia2022lp, zhou2018dynamic, trivedi2019dyrep, goyal2018dyngem}. 
Moreover, many methods attempted to accelerate the embedding process \cite{hou2020glodyne, zhang2021dynamic, xie2021learning,du2018dynamic, li2022robust}. For example, Du $et \ al.$ \cite{du2018dynamic} only updated the most affected nodes when the graph changes, and AdaNet \cite{li2022robust} adaptively determined the nodes to be update using pre-learned model with an intention to consider both the new information and time smoothness. To preserve the embedding stability for the newly arrived vertices, a series of inductive methods were proposed. GraphSAGE \cite{hamilton2017inductive} incorporated node features and Liu $et \ al.$ \cite{liu2018semi} adopted edge features to generate new embeddings. MINE \cite{chu2019inductive} promoted these methods into heterogeneous networks. For streaming graphs, Wang $et \ al.$\cite{wang2020streaming} trained Graph Neural Networks (GNN) that are able to preserve both new patterns and existed patterns simultaneously and Perini $et \ al.$ \cite{perini2022learning} extended GNN to let it being capable of processing streaming data. However, NN-based methodologies inevitably need training, thus the time needed for convergence is always consuming.

For MF-based methods, incremental update is the earliest idea. DANE \cite{li2017attributed} proposed a scheme to capture both the network and attribute changes, while dyHNE \cite{wang2020dynamic} introduced meta-path based proximities and updated embedding through eigenvalue perturbation. DHPE \cite{zhu2018high} used generalised SVD and its eigen perturbation to generate and update the embedding. NPDNE \cite{yu2020node} adopted NetMF to generate embedding and updated them. All methods mentioned above used matrix perturbation theory to update, instead of retraining, eigenvectors and eigenvalues, while errors will be inevitably accumulated. To fix this, TIMERS \cite{zhang2018timers} theoretically constructed an analytical guidance for the proper timing to restart embedding training from the error perspective. Moreover, using temporal smoothing for optimization is also a solution \cite{zhu2016scalable, ferreira2019modeling}, such as a online approximation method developed by Liu $et \ al.$ \cite{liu2019real}, which is motivated by Weighted Independent Cascade Model. Despite all these development, generating embedding for new nodes is still a tough problem for MF-based methods, and only a few recent works have attempted to solve it \cite{levin2018out, liu2019real}. Within this scene, a theoretical understanding of how the embedding space drift after dynamic change is absent, even though it is a crucial concern, and the result of which is a probable guide for timing chosen to retrain models. Therefore, in this paper, we will try to accomplish this theoretical exploration, and propose a scheme being able to generate new embeddings based on the theory.\par


\section{Problem Setting}
In this section, we formally define the problems that will be addressed in this paper.\par
\vspace{0.5em}
\noindent\textbf{Streaming graph embedding.} Streaming graph is a special kind of dynamic graph where the time is treated as continuous instead of discrete and new nodes are added. Graph streams are commonly seen in real-world problems. For example, in the citation networks, new articles as new nodes and new citations as new links are added in order. Our streaming graph embedding problem is defined as follows.

\begin{definition}[Graph Embedding]
    Given a time $t \in \mathbb{N}$, a graph at time $t$ is denoted as $\mathcal{G}_t = (\mathcal{V}_t, \ \mathcal{E}_t, \ \bm{A}_t)$, where $\mathcal{V}_t$ and $\mathcal{E}_t$ serves as the node set and edge set respectively, and $\bm{A}_t$ denotes the adjacent matrix of this graph at time $t$. The goal of graph embedding is to learn a function $\mathcal{V}_t \ \rightarrow \mathbb{R}^{d}$ that maps each vertice into a $d-$dimensional vector $\bm{b}_{v}$, stacking which vertically forms the embedding matrix $\bm{B}$ with $\vert\mathcal{V}_t \vert$ rows and $d$ columns.
\end{definition}

\begin{definition}[Streaming Graph Embedding]
    Denote $t_0$ as the beginning time with $\mathcal{G}_{t_0} = (\mathcal{V}_{t_0}, \ \mathcal{E}_{t_0}, \ \bm{A}_{t_0})$. At any time $t = t_0+ \Delta t$ with $\mathcal{G}_{t_0+ \Delta t} = (\mathcal{V}_{t_0+ \Delta t}, \ \mathcal{E}_{t_0+ \Delta t}, \ \bm{A}_{t_0+ \Delta t})$ where new vertices $\Delta \mathcal{V}$ and associated edges $\Delta \mathcal{E}$ are added, the goal of streaming graph embedding is to generate $\{\bm{e}_v\}_{v \in \Delta\mathcal{V}}$ and update $\{\bm{e}_v\}_{v \in \mathcal{V}_{t_0}}$.
\end{definition}

\vspace{0.5em}
\noindent\textbf{Node classification.} For dynamic network embedding, a classic task is node classification, that utilizing a supervised model to capture the labels information of nodes. This task is aim to solve two sub-problems: adds unknown labels about known nodes in the network, or identifies the labels of newly arrived vertices. The latter question is the target of this paper, which is significant for many real-world tasks, like identifying the topic of specific papers in the citation network, and predicting users' professions in social networks.


\section{Proposed Method: SIP}

In this section, we theoretically elaborate the space-invariant projection framework for streaming network embedding, or SIP for short. At first, we deduces the upper bound of the size of newly arrived nodes, i.e., $m$, more than which the restart of both embedding and downstream models’ training become necessary. Thereafter, the rationality of using projection to generate new embedding is clarified. 
Moreover, since SIP is a general framework being able to extend a range of static network embedding methods, we take Laplacian Eigenmaps (LE), NetMF \cite{qiu2018network}, AROPE \cite{zhang2018arbitrary} and GraRep \cite{cao2015grarep} as examples to construct dynamic methodologies. Noting that in this paper we mainly focus on the undirected networks, whose adjacent matrix are symmetric.

\subsection{Discussion of the restart threshold}
\label{threshold}


Since different MF-based methodologies construct different matrices to decompose, such as polynomial function of the adjacent matrix with different order \cite{yang2017fast}, we collectively denote them as $\bm{M}$, with their eigen-decomposition as $\bm{M} = \bm{U \Sigma U}^\text{T}$, which is also the SVD result as we focus on the situation that $\bm{M}$ is symmetric. The embedding matrix is always chosen as $\bm{U}_d$, the first $d$ column of $\bm{U}$, or the linear scaling associated with it such as $\bm{U}\sqrt{\bm{\Sigma}_d}$. Under this simplification, the drift of network embedding converts to the drift of eigenvectors, which span an invariant subspace of $\bm{M}$. Therefore, the analysis below will focus on the change of $\bm{U}$.

 Assume that there are $n$ nodes at the beginning time $t_0$, when the static network embedding methodology is applied to corresponding $\bm{M}_0$ to obtain the original embedding. Then $m$ more nodes arrived, with the object matrix becomes $\Tilde{\bm{M}}$. To connect $\bm{M}_0$ and $\Tilde{\bm{M}}$, here we denote
\begin{equation}
\bm{M} = \textbf{$\left[
    \begin{matrix}
    \bm{M}_0 && \bm{O}_{n\times m}\\
    \bm{O}_{m \times n} && \bm{O}_{m \times m}\\
    \end{matrix}
    \right]
    $},
\end{equation}
where $\bm{O}_{p \times q}$ is the zero matrix with $p$ rows and $q$ columns. For the convenience of following descriptions, let $\bm{X} = \bm{U}_d$ as the first $d$ eigenvectors of $\bm{M}$ and $\bm{Y}$ as the matrix composed of the remaining $(n+m-d)$ eigenvectors. Without loss of generality, here we assume $rank(\bm{M}) = n$. Then $\bm{X}$ and $\bm{Y}$ satisfies
$$
\bm{MX} = \bm{X} \bm{\Sigma}_d, \ \bm{\Sigma}_d
 = diag(\sigma_1, \ \sigma_2, \ ..., \ \sigma_d),
 $$
 $$
 \bm{MY} = \bm{Y} \hat{\bm{\Sigma}}_{n-d}, \ \hat{\bm{\Sigma}}_{n-d} = diag(\sigma_{d+1}, \ ..., \ \sigma_{n}, \ 0, \ ... , \ 0),
 $$
where $\sigma_i$ is the $i^{\text{th}}$ largest eigenvalue of $\bm{M}_0$. In addition, these two matrices can be written as partitioned matrices. To be specific,
$$
\bm{X} = \textbf{$\left[
    \begin{matrix}
    \bm{U}^{(1)}\\
    \bm{O}_{m \times d}\\
    \end{matrix}
    \right]
    $}, \ \text{and} \
\bm{Y} = \textbf{$\left[
    \begin{matrix}
    \bm{U}^{(2)} && \bm{O}_{n \times m}\\
    \bm{O}_{m \times (n-d)} && \bm{U}^{(3)}\\
    \end{matrix}
    \right]
    $},
$$
where $\bm{U}^{(1)} \in \mathbb{R}^{n \times d}$ is the first $d$ eigenvectors of $\bm{M}_0$, $\bm{U}^{(2)} \in \mathbb{R}^{n \times (n-d)}$ is the remaining $(n-d)$ eigenvectors and $\bm{U}^{(3)} \in \mathbb{R}^{m \times m}$ is a column orthonoraml matrix just to make columns of $\bm{Y}$ span a $(n+m-d)$ dimensional space. Then $\Tilde{\bm{M}}$ can be represented as
\begin{equation}
\Tilde{\bm{M}} = \bm{M} + \textbf{$\left[
    \begin{matrix}
    \Delta \bm{M} && \hat{\bm{E}}^{(1)}\\
    \hat{\bm{E}}^{(1)\text{T}} && \hat{\bm{E}}^{(2)}\\
    \end{matrix}
    \right]
    $} = \bm{M} + \bm{E}.
\label{2}
\end{equation}
$\Delta \bm{M} \in \mathbb{R}^{n \times n}$ represents the impact of network changes on the initial $n$ nodes, including the adding and deleting of nodes and edges. $\hat{\bm{E}}^{(1)} \in \mathbb{R}^{(n-m) \times m}$ is the network between $m$ new coming nodes and $n$ existed nodes, while $\hat{\bm{E}}^{(2)} \in \mathbb{R}^{m \times m}$ consists of mutual relations between $m$ new nodes. \par

To analytically explore networks with nodes add or removal, matrix perturbation theory was widely adopted \cite{chen2015node}, which is also the method to be used in this paper. Denoting $\rho_{\bm{S}}$ as the maximum singular value of matrix $\bm{S}$, and
\begin{equation}
    \text{sep}(\bm{\Sigma}_d, \ \hat{\bm{\Sigma}}_{n-d}) = \inf \limits_{\|\bm{P}\|=1} \|\bm{P\Sigma}_d - \hat{\bm{\Sigma}}_{n-d}\bm{P}\|>0
\end{equation}
then we can present the following theorem and a extended lemma .

\begin{theorem}
\label{subspace}
 If $\bm{X}$ is an invariant subspace of $\bm{M}$, and $(2\rho_{\hat{\bm{E}}^{(1)}} + \rho_{\hat{\bm{E}}^{(2)}} + \rho_{\Delta \bm{M}}) < \text{sep}(\bm{\Sigma}_d, \ \hat{\bm{\Sigma}}_{n-d})$, then, there is a unique matrix $\bm{P}$ such that the columns of 
 \begin{equation}
 \label{theorem_eq}
     \Tilde{\bm{X}} = \bm{(X+YP)(I + P^{\text{T}}P)}
 \end{equation} 
 form the orthonormal basis for the invariant subspace of $\Tilde{\bm{M}}$.
\end{theorem}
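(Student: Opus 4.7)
The plan is to follow the classical framework of Stewart's perturbation theorem for invariant subspaces. Because $[\bm{X},\ \bm{Y}]$ is orthogonal and block-diagonalizes $\bm{M}$, any $d$-dimensional subspace sufficiently close to $\text{range}(\bm{X})$ admits a unique parametrization of the form $\text{range}(\bm{X} + \bm{Y}\bm{P})$ with $\bm{P}\in\mathbb{R}^{(n+m-d)\times d}$. The factor $(\bm{I} + \bm{P}^{\text{T}}\bm{P})$ in (\ref{theorem_eq}) plays the role of an orthonormalization of this basis, so it suffices to prove existence and uniqueness of a matrix $\bm{P}$ for which $\text{range}(\bm{X} + \bm{Y}\bm{P})$ is invariant under $\Tilde{\bm{M}} = \bm{M} + \bm{E}$; the normalization is then a post-processing step.

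First I would impose the invariance condition $\Tilde{\bm{M}}(\bm{X}+\bm{Y}\bm{P}) = (\bm{X}+\bm{Y}\bm{P})\bm{L}$ for some $\bm{L}\in\mathbb{R}^{d\times d}$, substitute $\bm{M}\bm{X}=\bm{X}\bm{\Sigma}_d$ and $\bm{M}\bm{Y}=\bm{Y}\hat{\bm{\Sigma}}_{n-d}$, and then project against $\bm{X}^{\text{T}}$ and $\bm{Y}^{\text{T}}$, using $\bm{X}^{\text{T}}\bm{X}=\bm{I}$, $\bm{Y}^{\text{T}}\bm{Y}=\bm{I}$ and $\bm{X}^{\text{T}}\bm{Y}=\bm{0}$. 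The $\bm{X}^{\text{T}}$-projection identifies $\bm{L} = \bm{\Sigma}_d + \bm{X}^{\text{T}}\bm{E}(\bm{X}+\bm{Y}\bm{P})$, while the $\bm{Y}^{\text{T}}$-projection yields a nonlinear Sylvester-type equation
\begin{equation*}
\bm{P}\bm{\Sigma}_d - \hat{\bm{\Sigma}}_{n-d}\bm{P} = \bm{Y}^{\text{T}}\bm{E}\bm{X} + \bm{Y}^{\text{T}}\bm{E}\bm{Y}\bm{P} - \bm{P}\bm{X}^{\text{T}}\bm{E}\bm{X} - \bm{P}\bm{X}^{\text{T}}\bm{E}\bm{Y}\bm{P}.
\end{equation*}
Writing $\mathcal{T}(\bm{P}) := \bm{P}\bm{\Sigma}_d - \hat{\bm{\Sigma}}_{n-d}\bm{P}$, the definition of $\text{sep}(\bm{\Sigma}_d,\hat{\bm{\Sigma}}_{n-d})$ immediately gives $\|\mathcal{T}^{-1}\|\le 1/\text{sep}(\bm{\Sigma}_d,\hat{\bm{\Sigma}}_{n-d})$, so the equation recasts as the fixed-point problem $\bm{P} = \mathcal{T}^{-1}[\bm{R}(\bm{P})]$ with $\bm{R}$ the right-hand side above.

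The last step is to apply the Banach fixed-point theorem on a closed ball $\{\bm{P} : \|\bm{P}\|\le r\}$ for a suitable radius $r$. Exploiting the block form of $\bm{E}$ in (\ref{2}) together with the partitioned structure of $\bm{X}$ and $\bm{Y}$, each of $\|\bm{Y}^{\text{T}}\bm{E}\bm{X}\|$, $\|\bm{Y}^{\text{T}}\bm{E}\bm{Y}\|$, $\|\bm{X}^{\text{T}}\bm{E}\bm{X}\|$, $\|\bm{X}^{\text{T}}\bm{E}\bm{Y}\|$ can be bounded by explicit combinations of $\rho_{\Delta\bm{M}}$, $\rho_{\hat{\bm{E}}^{(1)}}$ and $\rho_{\hat{\bm{E}}^{(2)}}$, with the crucial term $\|\bm{Y}^{\text{T}}\bm{E}\bm{Y}\|$ controlled by $\rho_{\Delta\bm{M}} + 2\rho_{\hat{\bm{E}}^{(1)}} + \rho_{\hat{\bm{E}}^{(2)}}$. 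The hypothesis $2\rho_{\hat{\bm{E}}^{(1)}} + \rho_{\hat{\bm{E}}^{(2)}} + \rho_{\Delta\bm{M}} < \text{sep}(\bm{\Sigma}_d,\hat{\bm{\Sigma}}_{n-d})$ is precisely the quantitative budget needed for $\mathcal{T}^{-1}\bm{R}$ to be a strict contraction near $\bm{P}=\bm{0}$.

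The main obstacle I anticipate is the quadratic term $\bm{P}\bm{X}^{\text{T}}\bm{E}\bm{Y}\bm{P}$: the radius $r$ must be chosen small enough that both the self-map property $\|\mathcal{T}^{-1}\bm{R}(\bm{P})\|\le r$ and a uniform Lipschitz bound $\|\mathcal{T}^{-1}(\bm{R}(\bm{P}_1)-\bm{R}(\bm{P}_2))\|\le c\|\bm{P}_1-\bm{P}_2\|$ with $c<1$ hold simultaneously despite the nonlinear correction. The standard bookkeeping chooses $r$ on the order of $\rho_{\hat{\bm{E}}^{(1)}}/\text{sep}(\bm{\Sigma}_d,\hat{\bm{\Sigma}}_{n-d})$ and checks that the quadratic term is absorbed in the slack between the sum of $\rho$-terms and $\text{sep}$. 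Once $\bm{P}$ is obtained uniquely, applying the normalization factor indicated in the statement produces a matrix $\Tilde{\bm{X}}$ with orthonormal columns spanning the invariant subspace of $\Tilde{\bm{M}}$ associated with the perturbed leading eigenvalues, establishing the theorem.
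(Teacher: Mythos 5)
Your proposal and the paper's proof are really the same argument viewed from two sides. The paper computes the perturbation blocks $\bm{E}_{11},\bm{E}_{12},\bm{E}_{21},\bm{E}_{22}$ in the $(\bm{X},\bm{Y})$ coordinates and then invokes the invariant-subspace perturbation theorem of Stewart and Sun (p.~236) as a black box; you instead reprove that black box, deriving $\bm{L}=\bm{\Sigma}_d+\bm{X}^{\text{T}}\bm{E}(\bm{X}+\bm{Y}\bm{P})$ and the quadratic Sylvester equation for $\bm{P}$, and solving it by contraction using $\|\mathcal{T}^{-1}\|\le 1/\text{sep}(\bm{\Sigma}_d,\hat{\bm{\Sigma}}_{n-d})$. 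That derivation is correct and self-contained, and it buys transparency about where the hypothesis enters; the paper's version buys brevity by outsourcing exactly this step. (Both versions share the cosmetic issue that the normalizer in (\ref{theorem_eq}) should be $(\bm{I}+\bm{P}^{\text{T}}\bm{P})^{-1/2}$ for the columns to be orthonormal.)

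The genuine gap is the final quantitative step, which you assert rather than execute. The contraction argument delivers existence and uniqueness under the two conditions $\delta:=\text{sep}(\bm{\Sigma}_d,\hat{\bm{\Sigma}}_{n-d})-\|\bm{X}^{\text{T}}\bm{E}\bm{X}\|-\|\bm{Y}^{\text{T}}\bm{E}\bm{Y}\|>0$ and $\delta^2>4\,\|\bm{Y}^{\text{T}}\bm{E}\bm{X}\|\,\|\bm{X}^{\text{T}}\bm{E}\bm{Y}\|$, and neither of these is literally ``sum of the $\rho$'s less than sep.'' The missing observation, which is the only substantive content of the paper's proof beyond the citation, is that symmetry of $\bm{E}$ gives $\gamma:=\|\bm{Y}^{\text{T}}\bm{E}\bm{X}\|=\|\bm{X}^{\text{T}}\bm{E}\bm{Y}\|=:\eta$, so the quadratic condition collapses to the linear one $\delta>2\gamma$ (which subsumes $\delta>0$); identifying $\gamma$ with $\rho_{\hat{\bm{E}}^{(1)}}$ and the diagonal blocks with $\rho_{\Delta\bm{M}}$ and $\rho_{\hat{\bm{E}}^{(2)}}$ then yields exactly $2\rho_{\hat{\bm{E}}^{(1)}}+\rho_{\hat{\bm{E}}^{(2)}}+\rho_{\Delta\bm{M}}<\text{sep}(\bm{\Sigma}_d,\hat{\bm{\Sigma}}_{n-d})$. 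Relatedly, your statement that $\|\bm{Y}^{\text{T}}\bm{E}\bm{Y}\|$ is controlled by the full sum $\rho_{\Delta\bm{M}}+2\rho_{\hat{\bm{E}}^{(1)}}+\rho_{\hat{\bm{E}}^{(2)}}$ misallocates the budget: the $2\rho_{\hat{\bm{E}}^{(1)}}$ must be reserved for the off-diagonal terms $\gamma+\eta$, not charged to $\bm{E}_{22}$, or the hypothesis will not close the argument. One further point you should be aware of (inherited from the paper rather than introduced by you): since $\bm{Y}$ contains $\bm{U}^{(2)}$, the block $\bm{Y}^{\text{T}}\bm{E}\bm{X}$ also carries a $\bm{U}^{(2)\text{T}}\Delta\bm{M}\,\bm{U}^{(1)}$ contribution, so the clean identity $\gamma=\rho_{\hat{\bm{E}}^{(1)}}$ requires $\Delta\bm{M}$ to be absent or absorbed; your vaguer ``explicit combinations'' avoids asserting something false but, as written, does not arrive at the stated constants.
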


\begin{proof}
Since $\bm{MX=X\Sigma}_d$, $span(\bm{X})$ is an invariant subspace of $\bm{M}$. Also, due to the columns of $\bm{Y}$ span $span(\bm{X})^{\perp}$, then
$$
\bm{(X \ Y)^{\text{T}} M (X \ Y)} = \left[
    \begin{matrix}
    \bm{\Sigma}_d && \bm{O}_{d \times (n+m-d)}\\
    \bm{O}_{(n+m-d) \times d} && \hat{\bm{\Sigma}}_{n-d}\\
    \end{matrix}
    \right].
$$
After the perturbation, in order to calculate $\bm{(X \ Y)^{\text{T}} E (X \ Y)}$, let
\begin{align*}
    \bm{(X \ Y)}
    = & \left[
    \begin{matrix}
    \bm{U}^{(1)} & \bm{U}^{(2)} & \bm{O}_{n \times m}\\
    \bm{O}_{m \times d} & \bm{O}_{m \times (n-d)} & \bm{U}^{(3)}
    \end{matrix}
    \right] \\
    = & \left[
    \begin{matrix}
    \bm{U}_{n} & \bm{O}_{n \times m} \\
    \bm{O}_{m \times n} & \hat{\bm{U}}_{m \times m}
    \end{matrix}
    \right],
\end{align*}
where $\bm{U}_n$ is the horizontal stack of $\bm{U}^{(1)}$ and $\bm{U}^{(2)}$, then
\begin{align*}
\bm{(X \ Y)^{\text{T}} E (X \ Y)} = & \left[
    \begin{matrix}
    \bm{U}_{n}\Delta\bm{M}\bm{U}_{n}^{\text{T}} && \bm{U}_{n}^{\text{T}}\hat{\bm{E}}^{(1)}\hat{\bm{U}}\\    \hat{\bm{U}}^{\text{T}}\hat{\bm{E}}^{(1)\text{T}}\bm{U}_{n} && \hat{\bm{U}}^{\text{T}}\hat{\bm{E}}^{(2)}\hat{\bm{U}}^{(1)}\\
    \end{matrix}
    \right]\\
    = & \left[
    \begin{matrix}
    \bm{E}_{11} && \bm{E}_{12}\\    
    \bm{E}_{21} && \bm{E}_{22}\\
    \end{matrix}
    \right].
\end{align*}
According to the perturbation theory of invariant subspaces (\cite{stewart1990matrix}, p236), let $\| \cdot \|$ represent a consistent family of unitarily invariant norms and set
\begin{align*}
    \gamma = & \| \bm{E}_{21} \| = \| \bm{E}^{(1)} \| \\
    \eta = & \|\bm{E}_{12} \| = \| \bm{E}^{(1)} \| \\
    \delta = &\text{sep}(\bm{\Sigma}_d, \ \hat{\bm{\Sigma}}_{n-d}) -  \| \bm{E}_{11} \| - \| \bm{E}_{22} \| \\
    = & \text{sep}(\bm{\Sigma}_d, \ \hat{\bm{\Sigma}}_{n-d}) -  \| \Delta \bm{M} \| - \| \hat{\bm{E}}^{2} \|,
\end{align*}
where $\text{sep}(\bm{\Sigma}_d, \ \hat{\bm{\Sigma}}_{n-d}) = \inf \limits_{\|\bm{P}\|=1} \|\bm{P\Sigma}_d - \hat{\bm{\Sigma}}_{n-d}\bm{P}\|>0$.
Then, if $\delta > 0$ and $\delta^2 > 4*\eta \gamma$, the unique $\bm{P}$ exists that satisfies requirement in the theorem. Since here $\eta = \gamma > 0$, $\delta^2 > 4*\eta \gamma$ converts to $\delta > 2*\eta$ which naturally satisfies $\delta > 0$.\par

Let $\|\cdot \|$ to be the spectral norm, then $\|\hat{\bm{E}}^{(1)}\| = \rho_{\hat{\bm{E}}^{(1)}}$, $\|\hat{\bm{E}}^{(2)}\| = \rho_{\hat{\bm{E}}^{(2)}}$ and $\|\Delta \bm{M}\| = \rho_{\Delta \bm{M}}$, thus equation (\ref{requirement}) becomes
\begin{equation}
    \rho_{\Delta \bm{M}} + 2\rho_{\hat{\bm{E}}^{(1)}} + \rho_{\hat{\bm{E}}^{(2)}} < \text{sep}(\bm{\Sigma}_d, \ \hat{\bm{\Sigma}}_{n-d}).
\end{equation}
Therefore, the conclusion is established.
\end{proof}

Many conclusions are revealed by deduction above. At first, according to the definition of invariant subspace, if $\bm{X}$ is still the basis of invariant subspace of $\Tilde{\bm{M}}$ after perturbation, namely the subspace drift does not happen, all elements in $\bm{E}_{21} = \hat{\bm{U}}^{\text{T}}\hat{\bm{E}}^{(1)\text{T}}\bm{U}_{n}$ should equal to zero. While $\bm{U}_n$ and $\hat{\bm{U}}$ are orthonormal matrices, if and only if $\hat{\bm{E}}^{(1)}$ equals to a zero matrix can the requirement be satisfied, which means that the new arrived nodes have no interaction with previous vertices. However, practically, this is unrealistic for a network in real-world, and theoretically, under the circumstances, $\bm{X}$ is no longer the eigenvector matrix of the whole $\Tilde{\bm{M}}$, since the rank of space that $\bm{X}$ spans is at most $n$. More importantly, if $\hat{\bm{E}}^{(1)} = \bm{O}$ and we regard $\bm{X}$ as the embedding matrix, the embeddings of new vertices are all equal to zero vectors, which is obviously not in line with our demands. Therefore, the space drift is almost sure to happen, as long as there are new nodes arrived.\par

Then, we are supposed to figure out that ``how much'' drift will happen to the $d-rank$ subspace $\bm{X}$, and how to measure the space difference. According to Theorem \ref{subspace}, if the requirement is satisfied, there exists a $\Tilde{\bm{X}}$ that is constructed through rotation and shift of $\bm{X}$, while those acts that turning $\bm{X}$ to $\Tilde{\bm{X}}$ are all determined by $\bm{P}$. Again, according to the matrix perturbation theory (\cite{stewart1990matrix}, p232), the singular values of $\bm{P}$ are actually tangents of canonical angles between $span(\bm{X})$ and $span(\Tilde{\bm{X}})$, and more importantly,
\begin{equation}
    \|\bm{P}\| < 2\frac{\gamma}{\delta},
\end{equation}
thus to some extent, the existence of $\bm{P}$ makes the difference between the two subspaces measurable and limitable, since the tangent of the max canonical angle has a upper bound. While, since $d$, the rank of subspace $\bm{X}$ and $\Tilde{\bm{X}}$, is much smaller than $(n+m)$, the rank of $\Tilde{\bm{M}}$, thus the subspaces constructed by $\bm{X}$ and $\Tilde{\bm{X}}$ can be two completely different $d-$ rank subspaces in the $(n+m)$ dimensional manifold. Intuitively speaking, only when these two subspaces are similar enough, can the tangent of canonical angles be limited. This may also indicate that, under the existence of $\bm{P}$, the first a few columns of of $\bm{X}$ and $\Tilde{\bm{X}}$, namely their eigenvectors corresponding to the largest eigenvalues respectively, are highly correlated. This inference has important guiding significance if its correctness can be proved. However, the calculation of $\text{sep}(\bm{\Sigma}_d, \ \hat{\bm{\Sigma}}_{n-d})$ is difficult. In order to apply Theorem \ref{subspace}, a looser condition under the existence of $\bm{P}$ is proposed below.\par

\begin{lemma}
\label{lemma_requirment}
    When the matrix $\bm{P}$ satisfies Theorem \ref{subspace} exists, 
    \begin{equation}
        2\rho_{\hat{\bm{E}}^{(1)}} + \rho_{\hat{\bm{E}}^{(2)}} + \rho_{\Delta \bm{M}} < \sigma_1 - \sigma_{2}
    \label{judgement_equation}
    \end{equation}
    is established.
\end{lemma}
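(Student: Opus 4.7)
The plan is to chain the sufficient condition from Theorem \ref{subspace}, namely $2\rho_{\hat{\bm{E}}^{(1)}} + \rho_{\hat{\bm{E}}^{(2)}} + \rho_{\Delta \bm{M}} < \text{sep}(\bm{\Sigma}_d, \hat{\bm{\Sigma}}_{n-d})$, with an upper bound of the form $\text{sep}(\bm{\Sigma}_d, \hat{\bm{\Sigma}}_{n-d}) \leq \sigma_1 - \sigma_2$. Concatenating the two inequalities then immediately delivers equation (\ref{judgement_equation}), so the entire content of the lemma is pushed into establishing the sep upper bound.

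The substance of the argument lies in that sep bound. I would start from the observation that, by construction, both $\bm{\Sigma}_d = \text{diag}(\sigma_1, \ldots, \sigma_d)$ and $\hat{\bm{\Sigma}}_{n-d} = \text{diag}(\sigma_{d+1}, \ldots, \sigma_n, 0, \ldots, 0)$ are diagonal, and hence normal. A standard identity from Stewart's monograph (the same reference already invoked in the proof of Theorem \ref{subspace}) then reduces the spectral-norm sep to the minimum pairwise distance between diagonal entries of the two matrices. To upper bound this minimum by $\sigma_1 - \sigma_2$, I would exhibit a concrete unit-norm test matrix $\bm{P}$: the simplest candidate is a rank-one $\bm{P}$ whose single nonzero entry pairs the top eigenvalue $\sigma_1$ of $\bm{\Sigma}_d$ with an appropriate entry of $\hat{\bm{\Sigma}}_{n-d}$, which gives $\|\bm{P}\bm{\Sigma}_d - \hat{\bm{\Sigma}}_{n-d}\bm{P}\|_2 = \sigma_1 - \sigma_{d+1}$. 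The monotone ordering $\sigma_{d+1} \leq \sigma_2$ closes the case $d = 1$ directly; for $d \geq 2$ I would try to sharpen the bound to $\sigma_1 - \sigma_2$ via a more carefully chosen test matrix that encodes both $\sigma_1$ and $\sigma_2$ simultaneously.

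The main obstacle I anticipate is precisely this sharpening step for $d \geq 2$. A rank-one test matrix alone yields only the looser bound $\sigma_1 - \sigma_{d+1}$, and reaching the claimed $\sigma_1 - \sigma_2$ appears to require either a rank-two construction that aligns the top two eigenvalues of $\bm{\Sigma}_d$ with neighbouring entries of $\hat{\bm{\Sigma}}_{n-d}$, or an implicit structural observation about the eigenvalue decay of $\bm{M}_0$ (for instance, that the largest gap sits at the top of the spectrum). Once this algebraic step is in place, the lemma follows immediately by combination with Theorem \ref{subspace}, with no further perturbation machinery needed.
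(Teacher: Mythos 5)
Your overall route is exactly the paper's: chain the hypothesis of Theorem \ref{subspace} with an upper bound $\text{sep}(\bm{\Sigma}_d,\hat{\bm{\Sigma}}_{n-d}) \leq \sigma_1-\sigma_2$. The paper gets its sep bound from the general fact $\text{sep}(\bm{S}_1,\bm{S}_2)\leq \min|\mathcal{L}(\bm{S}_1)-\mathcal{L}(\bm{S}_2)|$ (Stewart and Sun, p.~233), which for these two diagonal blocks gives $\text{sep}(\bm{\Sigma}_d,\hat{\bm{\Sigma}}_{n-d})\leq \sigma_d-\sigma_{d+1}$; your test-matrix construction is just the explicit witness of that fact, except that you paired the wrong entries. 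Taking $\bm{P}=\bm{e}_1\bm{e}_d^{\text{T}}$ (pairing $\sigma_d$ with $\sigma_{d+1}$) rather than pairing $\sigma_1$ with $\sigma_{d+1}$ gives $\|\bm{P\Sigma}_d-\hat{\bm{\Sigma}}_{n-d}\bm{P}\|=\sigma_d-\sigma_{d+1}$, which is the minimum over all such rank-one choices; your choice yields $\sigma_1-\sigma_{d+1}\geq\sigma_1-\sigma_2$, an upper bound in the useless direction. So the correct intermediate target is $\text{sep}\leq\sigma_d-\sigma_{d+1}$, not $\text{sep}\leq\sigma_1-\sigma_2$ directly.

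The obstacle you flag for $d\geq 2$ is genuine, and it is not resolved by the paper either: the paper simply writes $(\sigma_d-\sigma_{d+1})<\sigma_1-\sigma_2$ with no justification, and this inequality is false in general (take a spectrum with $\sigma_1=\sigma_2$ and a large gap at position $d$; then $\sigma_d-\sigma_{d+1}>0=\sigma_1-\sigma_2$). It holds only under an extra structural assumption of the kind you suspect --- that the dominant spectral gap sits at the top of the spectrum, as is typical for adjacency-type matrices of real networks with a Perron-like leading eigenvalue --- but that assumption is nowhere stated as a hypothesis of the lemma. So your instinct that a rank-two construction or an eigenvalue-decay assumption is needed is correct; no purely algebraic sharpening exists, and the honest fix is to either state the lemma with the bound $\sigma_d-\sigma_{d+1}$ (which follows rigorously) or add the monotone-gap hypothesis explicitly. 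Your proposal therefore reaches exactly the point where the paper's proof silently jumps, and correctly identifies that the jump needs justification.
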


\begin{proof}
    Need to add that, 
\begin{equation}
    \text{sep}(\bm{S}_1, \ \bm{S}_2) \leq min|\mathcal{L}(\bm{S}_1) - \mathcal{L}(\bm{S}_2)|
\end{equation}
where $\mathcal{L}(\bm{S})$ is the set of eigenvalues of $\bm{S}$ (\cite{stewart1990matrix}, p233). Thus
\begin{equation}
    \text{sep}(\bm{\Sigma}_d, \ \hat{\bm{\Sigma}}_{n-d}) \leq (\sigma_d - \sigma_{d+1}) < \sigma_1 - \sigma_2
\end{equation}
Therefore the result converts to the requirement that
\begin{equation}
    \rho_{\Delta \bm{M}} + 2\rho_{\hat{\bm{E}}^{(1)}} + \rho_{\hat{\bm{E}}^{(2)}} < \sigma_1 - \sigma_2 .
\label{requirement}
\end{equation}
\end{proof}

Therefore, practically, after $m$ new nodes arrived, we can directly calculate whether equation \ref{judgement_equation} is satisfied to indicate whether the retraining is in need. The maximum number $m$ satisfies the equation is denoted as $m_0$.



\subsection{Empirical evidence of the restart threshold}
\label{empirical restart}

Then, we will use $m_0$ calculated in different datasets and different $n$ to validate the efficiency of theory above.

Three datasets are leveraged here, the Protein-Protein Interaction subgraph \cite{stark2010biogrid}, Blogcatalog and Flickr \cite{tang2009relational}. To make them more reasonable as time-varying networks, those not in the giant connected component were omitted in advance.\par 

Four MF-based embedding methods, Laplacian Eigenmaps (LE) \cite{belkin2001laplacian}, NetMF \cite{qiu2018network}, AROPE \cite{zhang2018arbitrary} and GraRep \cite{cao2015grarep} are employeed to construct target matrices, i.e. $\bm{M}$ mentioned above. To be specific, for LE, the target matrix is the normalized laplacian matrix, while for NetMF, the target matrix is the deepwalk matrix proposed in the paper. For AROPE, the target matrix is simply $\bm{S} = w_1 \bm{A} + w_2 \bm{A}^2 +... + w_q\bm{A}^q$ as the paper defined, with $q=3$ and $w_i = (0.01)^{i-1}$ chosen. Since in GraRep there are $k$ positive log probability matrices constructed through $(\bm{D}^{-1}\bm{A})^i$ for $1 \leq i \leq k$ to decompose, we just choose the $k^{th}$ order one as the target matrix, since it has the minimum $m_0$. Ideally, all constructed matrices are supposed to process $m_0$ under given $n$. However, for LE, whose $\bm{M}$ are normalized matrices related to the adjacent matrix, their absolute values are small, thus the singular values are small, which leads the unsatisfaction of Theorem \ref{subspace} even when $m=0$. Therefore, for LE, we choose adjacent matrix $\bm{A}$ as the target matrix.

After $\bm{M}$ with size $n$ is constructed, its first $128$ eigenvectors are calculated. Thereafter, $\bm{M}$ with size $n+m$ for different $m$ is also constructed and decomposed. We are supposed to figure out how the space spanned by the embedding of the first $n$ vertices drift after retraining, thus the correlation coefficient of the first $n$ elements in first $128$ eigenvectors are chosen as the metric here. That’s to say, a high coefficient means no significant data drift and the embedding spaces is therefore invariant, i.e., the retraining is unnecessary. The results are plotted in Fig.\ref{heatmap_fig} where the horizontal axis indicates the order of eigenvectors, and the vertical axis is the $m$ index, with the numerical result of maximum integer $m_0$ satisfies Theorem \ref{subspace} marked out. 


\begin{figure*}[htbp]
    \centering
    \includegraphics[width=\linewidth]{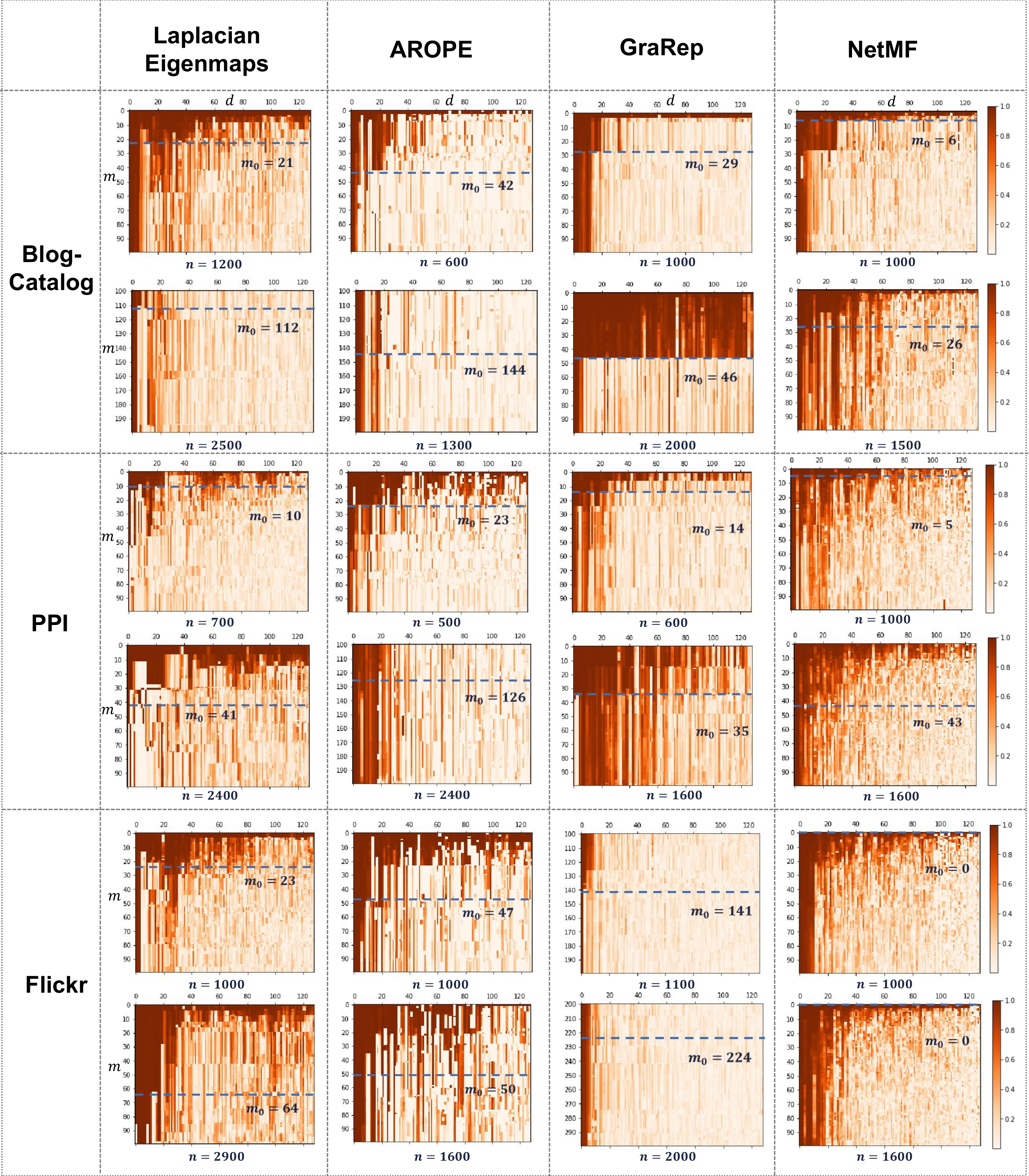}
    \caption{Empirical evidence of Theorem 1 through the use of three different real networks. We firstly construct the target matrix of different methodologies in different sizes to decompose, then calculate the correlation coefficient of the first $n$ elements of corresponding eigenvectors, between matrix with size $n$ and matrix with size $n+m$. $m_0$ is the theoretical upper bound of $m$ to ensure the existence of $\bm{P}$ in Theorem 1.}
    \label{heatmap_fig}
\end{figure*}

As can be seen in Fig.\ref{heatmap_fig}, at first, the inference that the correlation coefficient will come across a ``drop'' with the increase of $m$ is empirically proved, for nearly all matrices constructed by different methodologies and different datasets. Moreover, the guiding significance of Lemma \ref{lemma_requirment} for choosing $m_0$ is apparent, even though it is not the strict necessary and sufficient condition of the existence of $\bm{P}$. For example, for the target matrix constructed by GraRep using the Blogcatalog dataset, when $n = 1000$ and $n = 2000$, around the theoretically limitation $m_0$, a sudden drop of the correlation coefficient is observed for $d > 20$ and $d > 5$ respectively.
 Although in some cases, the correlation coefficient around the value of $m_0$ is not remarkably different, such as $\bm{M}$ constructed by AROPE in the PPI dataset when $2400$, there are still only the first few eigenvectors are highly correlated,  which is significantly lower than the common embedding dimension $64$ or $128$, thus the necessity of retraining the embedding when $m > m_0$ is also hold. Another case that needs to be elucidated is that $m_0 = 0$ is probably met for some circumstances. For example, matrices constructed by NetMF using the Flickr dataset always meet the situation $m_0=0$. It can be seen from the figure that this is caused by the fact that the high correlation coefficient of all $128$ eigenvectors only appearing in the first few $m$, while which is quickly reducing with the increase of $m$ except for the first few $d$. \par

\subsection{Embedding Generation through projection}

As discussed both in the section \ref{sec:introduction} and section \ref{threshold}, the purpose of figuring out the threshold of space shift is to guide the inevitably retraining of both embedding and the downstream model. Thus inversely, when the number of arrived nodes, $m$, is smaller than the threshold, it is possible and reasonable to both update, instead of retraining, the existed embedding and generate new embedding. Since many updating methodologies are widely proposed both for MF-based and NN-based schemes, here we mainly focus on the embedding generation.


As implied in the title, we will use projection to accomplish the procedure above. Usually, MF-based methodologies factorize $\bm{M}$ into a product of two matrices, $\bm{B}_0 \bm{V}_0$, with $\bm{B}_0$ denotes the embedding matrix and $\bm{V}_0$, which is usually column orthogonal. For example, when using SVD to factorize $\bm{M}_0 = \bm{U\Sigma V}^\text{T}$, then $\bm{U}\sqrt{\bm{\Sigma}}$ is a frequently chosen embedding. Then, we can denote
\begin{equation}
    \bm{M}_0 = (\bm{U}\sqrt{\bm{\Sigma}})(\sqrt{\bm{\Sigma}} \bm{V}^{\text{T}}) = \bm{B}_0 \bm{V}_0,
\end{equation}
thus inversely, assuming $\bm{M}_0$ is a full rank matrix without loss of generality,
\begin{equation}
    \bm{B}_0 = \bm{M}_0\bm{V}_0^{-1} = \bm{M}_0 (\bm{V} \sqrt{\bm{\Sigma}}^{-1}),
\end{equation}
where target matrix $\bm{M}_0$ is projected on a ``basis matix'', $\bm{V}_0^{-1}$. Therefore, when a new node comes in and the new target matrix $\bm{M}_1 \in \mathbb{R}^{(n+1) \times (n+1)}$ is constructed, we can obtain $\boldsymbol{m} = \bm{M}_1\left[(n+1),0:n\right] \in \mathbb{R}^{1 \times n}$ as the target vector of the $i^{th}$ new node. After projecting it to each ``basis'', namely calculating the inner product of $\boldsymbol{m}$ and each column of the basis matrix, the results as ``coordinates'' on each basis, can be regard as the embedding. Therefore, its embedding can be constructed through

\begin{equation}
    \boldsymbol{b}_{new} = \boldsymbol{m}\bm{V}_0^{-1}.
\end{equation}
More interestingly, this projection process can also be treated as a weighted average of existed network embedding. Since $\bm{V}_0^{-1} = \bm{M}_0^{-1} \bm{B}_0$, thus
\begin{equation}
\label{projection}
    \boldsymbol{b}_{new} = \boldsymbol{m} \bm{M}_0^{-1} \bm{B}_0 = (m_1,m_2,...,m_n) \bm{M}_0^{-1} 
    \begin{bmatrix}
        \boldsymbol{b}_1 \\
        \boldsymbol{b}_2 \\
        \vdots \\
        \boldsymbol{b}_n \\
    \end{bmatrix},
\end{equation}
where $\boldsymbol{b}_i$ is the embedding vector of the $i^{th}$ new node. Therefore, the word ``projection'' can also mean using the correlation between the new node and first $n$ nodes for the linear combination to get the new embedding, whose time complexity is $O(n)$ with linear to the node size. In the streaming setting where nodes come orderly, the projection as Equation (\ref{projection}) can be applied to different $\boldsymbol{m}_i$ successively, or even parallelly when needed. Since using Theorem \ref{subspace} to judge when to restart is not time economical, thus practically, we can set a time interval or new node size for determing when to apply judgement.\par

As for the scene with discrete time interval, when $m$ nodes come in simultaneously, a more concise form of projection can be represented as

\begin{equation}
    \bm{B}_{new} = \hat{\bm{E}}^{(1)}\bm{V}_0^{-1},
\end{equation}
where $\hat{\bm{E}}^{(1)} = \bm{M}_1[(n+1):(n+m),0:n]$, the notation aligned with equation (\ref{2}). The columns of $\bm{V}_0^{-1}$, constructed through matrix decomposition, is mainly orthogonal, thus using Theorem \ref{subspace} to determine when to restart becomes more reasonable, since the stability of basis is the prerequisite of projection.\par

Apparently, SIP is a framework applicable for different static network embedding schemes. In Fig.\ref{SIP_image}, its flow chart is shown, with $T(\cdot)$ denoting the functions to convert adjacent matrix to target matrix. SIP can be divided into three main parts: static embedding, space drift judgement and embedding generation. The first part is used to construct the original basis, with the second part that confirms the feasibility of projection follows, which is not necessarily need for every generation epoch. For the embedding generation part, $\bm{M}_1$ is applied with $\bm{B}_0$ to construct new embeddings. 

\begin{figure}[htbp]
    \centering
    \includegraphics[width=\linewidth]{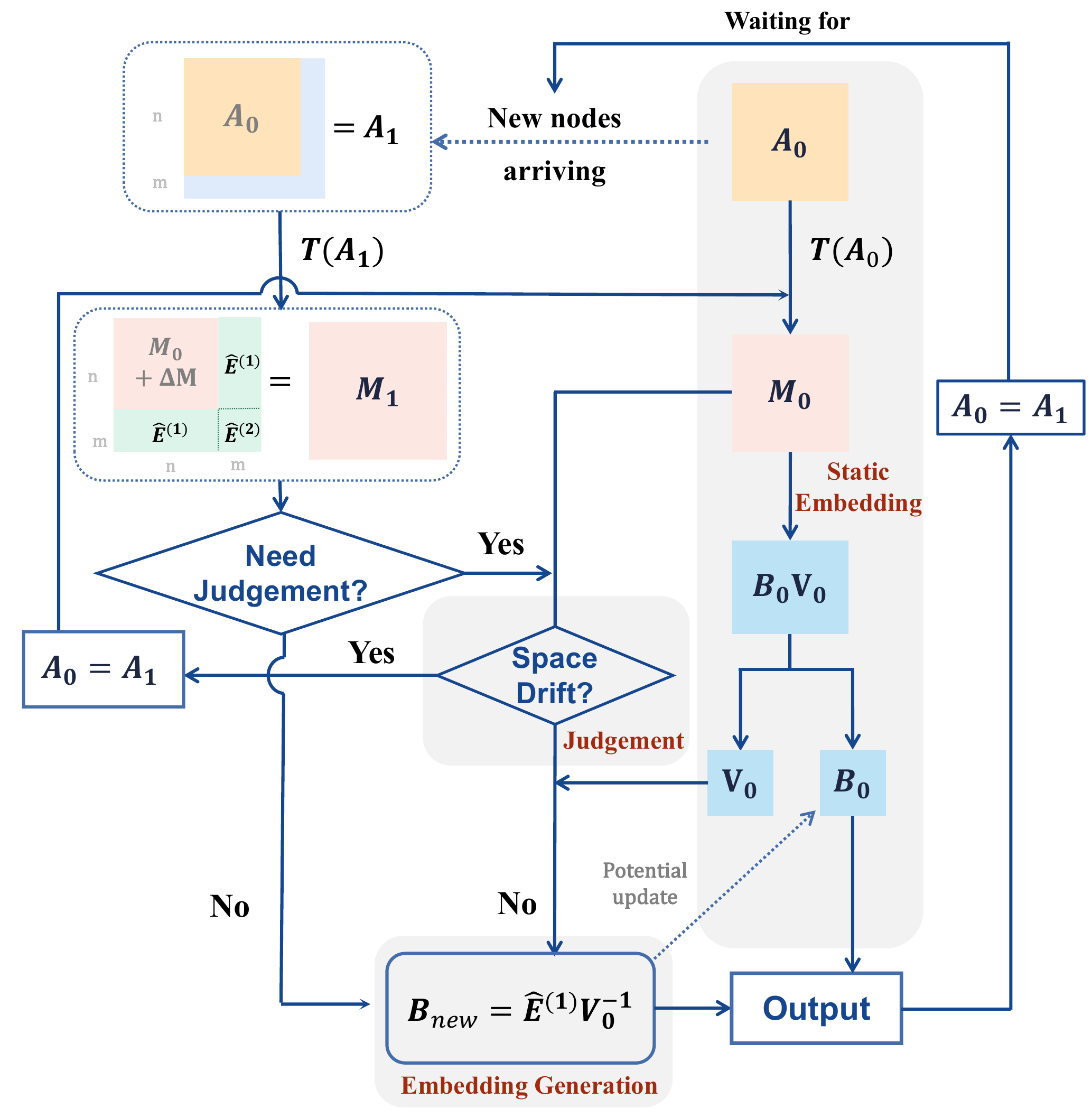}
    \caption{The flow chart of SIP framework for embedding generation. The framework mainly consists of three steps: at first, the static embedding is applied to first $n$ nodes to construct the basis matrix for later operation. Then, when new nodes arriving, a judgement of whether the space has drifted applied. When the original space is mainly maintained, new embedding generated through projection, otherwise the original embedding is retrained.}
    \label{SIP_image}
\end{figure}

\subsection{Instantiations of SIP}
\label{instantiation}

Even though SIP is direct and simple, when combining it with specific MF-based methodology, there are many variance based on the projection idea. In this subsection, four typical MF-based methodologies will be filled into the SIP framework, and the construction details will be elaborated to demonstrate the flexibility of SIP.

\subsubsection{SIP-LE}
The first instantiation goes into the classic method Laplacian Eigenmaps (LE) \cite{belkin2001laplacian}. The embedding acquired through LE is directly calculated through a generalized SVD problem $\bm{L}\boldsymbol{\alpha} = \lambda \bm{D} \boldsymbol{\alpha}$. While practically, LE is conducted through the eigen decomposition of the normalized laplacian matrix \cite{rozemberczki2020karate}, thus the normalized laplacian matrix is regarded as $\bm{M}_0$ here. The first $d$ eigenvectors acquired through this decomposition is regarded as $\bm{B}_0$, i.e.
\begin{equation}
    \bm{M}_0 \approx \bm{U}_d (\bm{\Sigma}_d \bm{U}_d^{\text{T}}) = \bm{B}_0 \bm{V}_0,
\end{equation}
thus
\begin{equation}
    \bm{V}_0^{-1} = \bm{U}_d \bm{\Sigma}_d^{-1}.
\end{equation}
Thereafter, the new embedding to be generated is
\begin{equation}
    \boldsymbol{b}_{new} = \boldsymbol{m}\bm{U}_d \bm{\Sigma}_d^{-1},
\end{equation}
where $\boldsymbol{m}$ is its target vector.

A parameter needs to be  introduced here is $C$, the space drift check state, which is a binary indicator to represent whether the judgment in SIP should be down. Then the procedure of SIP-LE can be represented as Algorithm \ref{SIP-LE-ALG}. Another trick in practice is that, in order to reduce the numerical instability introduced by small elements in $\bm{\Sigma}_d$, we replace all eigen-values that less than $10^{-1}$ as $1$ before calculating $\bm{\Sigma}_d^{-1}$.

\begin{algorithm}
\caption{SIP-LE} 
\label{SIP-LE-ALG}
\begin{algorithmic}
\REQUIRE Original factorization results $\bm{U}_d$ and $\bm{\Sigma}_d$, space drift check state $C$, new adjacent matrix $\bm{A}_1 \in \mathbb{R}^{(n+m) \times (n+m)}$.
\ENSURE Embedding for new nodes $\bm{B}_{new}$.
\STATE $\bm{M}_1 \Leftarrow$ Normalized laplacian matrix of $\bm{A}_1$ 
\IF{$C = 0$ or ($C=1$ and equation (\ref{judgement_equation}) satisfied)} 
    \STATE $\bm{B}_{new} = \bm{M}_1\left[n+1:n+m,0:n\right] \bm{U}_d \bm{\Sigma}_d^{-1}$
    \STATE Return $\bm{B}_{new}$
\ELSE
    \STATE $\left[\bm{U}, \bm{\Sigma}\right]$ = eigen-decomposition($\bm{M}_1$)
    \STATE $\bm{B}_{new} = \bm{U}\left[n+1:n+m,0:d\right]$
    \STATE Return $\bm{B}_{new}$
\ENDIF 
\end{algorithmic}
\end{algorithm}

\subsubsection{SIP-AROPE}
As for AROPE, this instantiation is quite similar to LE, since its core idea is also to eigen-decompose the target matrix $\bm{M} = T(\bm{A}) = w_1\bm{A} + w_2\bm{A}^2 +...+w_q\bm{A}^q$. Thereafter, its decomposition $SVD(\bm{M}) \approx \bm{U}_d \bm{\Sigma} \bm{V}_d^{\text{T}}$ are acquired, the embedding $\bm{B}$ is $\bm{U}_d \sqrt{\bm{\Sigma}_d}$. Therefore,
\begin{equation}
    \bm{V}_0 = \sqrt{\bm{\Sigma}_d}\bm{V}_d^{\text{T}}, 
\end{equation}
then two main important matrices for SIP, $\bm{M}$ and $\bm{V}_0$, are accessible.\par

\begin{algorithm}
\caption{SIP-AROPE} 
\label{SIP-AROPE-ALG}
\begin{algorithmic}
\REQUIRE Original factorization results $\bm{V}_d$ and $\bm{\Sigma}_d$, space drift check state $C$, new adjacent matrix $\bm{A}_1 \in \mathbb{R}^{(n+m) \times (n+m)}$.
\ENSURE Embedding for new nodes $\bm{B}_{new}$.
\STATE $\bm{M}_1 \Leftarrow T(\bm{A}_1) = w_1\bm{A}_1 + w_2\bm{A}_1^2 +...+w_q\bm{A}_1^q$ 
\IF{$C = 0$ or ($C=1$ and equation (\ref{judgement_equation}) satisfied)} 
    \STATE $\bm{B}_{new} = \bm{M}_1\left[n+1:n+m,0:n\right] \bm{V}_d\bm{\Sigma}_d^{-\frac{1}{2}}$
    \STATE Return $\bm{B}_{new}$
\ELSE
    \STATE $\left[\bm{X}, \bm{\Lambda}\right]$ = eigen-decomposition($\bm{M}_1$)
    \STATE Calculate the svd results $\left[ \bm{U}, \bm{\Sigma}, \bm{V} \right]$ using $\left[\bm{X}, \bm{\Lambda}\right]$ according to AROPE
    \STATE $\bm{B} = \bm{U} \sqrt{\bm{\Sigma}}$
    \STATE $\bm{B}_{new} = \bm{B}\left[n+1:n+m,0:d\right]$ 
    \STATE Return $\bm{B}_{new}$
\ENDIF 
\end{algorithmic}
\end{algorithm}

It can be seen from Algorithm \ref{SIP-AROPE-ALG}, the main procedure is the same as Algorithm \ref{SIP-LE-ALG}. The differences appeared mainly in the construction of the target matrix.

\subsubsection{SIP-GraRep}

The embedding generation in GraRep is different from LE and AROPE. Unlike integrally operate the target matrix to construct embedding, GraRep decomposes the first to the $k^{th}$ order of the positive log probability matrix and horizontally stack the embedding together to construct the final embedding. Therefore, when using SIP to generate new embedding, we are supposed to separately project different target vectors on different basis. \par

To be specific, computing $\bm{S}^k = (\bm{D}^{-1}\bm{A})^k$ and the $k^{th}$ positive log probability matrix $\bm{X}^k$ that satisfies $\bm{X}^k_{i,j} = log(\frac{\bm{S}^k_{i,j}}{\tau_j^k}) - log(\beta)$ with $\tau_j^k = \sum_p \bm{A}_{p,j}^k$ and $\beta$ is a parameter. After assigning negative entries of $\bm{X}^k$ to $0$ and $SVD(\bm{X}^k) \approx  \bm{U}_d^k \bm{\Sigma}_d^k \bm{V}_d^{k\text{T}}$, its SVD results $\bm{U}_d^k \sqrt{\bm{\Sigma}^k_d}$ are used for the $k-$step representation. Therefore, for each $\bm{X}^k$, its
\begin{equation}
    (\bm{V}_0^k)^{-1} = \bm{V}_d^k (\bm{\Sigma}_d^k)^{-\frac{1}{2}}.
\end{equation}
Details of this algorithm are presented in Algorithm \ref{SIP-GraRep-ALG}.

\begin{algorithm}
\caption{SIP-GraRep} 
\label{SIP-GraRep-ALG}
\begin{algorithmic}
\REQUIRE Original factorization results $\bm{V}_d^1,...,\bm{V}_d^k$ and $\bm{\Sigma}_d^1,...,\bm{\Sigma}_d^k$, low rank $d$, maximum order $k$, space drift check state $C$, new adjacent matrix $\bm{A}_1 \in \mathbb{R}^{(n+m) \times (n+m)}$.
\ENSURE Embedding for new nodes $\bm{B}_{new}$.
\IF{$C = 0$ or ($C=1$ and equation (\ref{judgement_equation}) satisfied)} 
    \STATE Create a list $\bm{B}_{list}$ that is able to add elements
    \FOR{i = 1 to $k$}
        \STATE $\bm{M}_1 \Leftarrow$ Construct $\bm{X}^k$ using $\bm{A}_1$
        \STATE $\bm{M}_{target} = \bm{M}_1\left[n+1:n+m,0:n\right]$
        \STATE $\bm{B}_{temp} = \bm{M}_{target} \bm{V}_d^k (\bm{\Sigma}_d^k)^{-\frac{1}{2}}$
        \STATE Add $\bm{B}_{temp}$ to the $\bm{B}_{list}$
    \ENDFOR
    \STATE Horizontally stack matrices in $\bm{B}_{list}$ to construct $\bm{B}_{new}$
    \STATE Return $\bm{B}_{new}$
\ELSE 
    \STATE $\bm{B}_{all} \Leftarrow$ Using GraRep to retrain embedding of $\bm{A}_1$
    \STATE $\bm{B}_{new} = \bm{B}_{all}\left[n+1:n+m, 0:d \right]$
    \STATE Return $\bm{B}_{new}$
\ENDIF 
\end{algorithmic}
\end{algorithm}

\subsubsection{SIP-NetMF}
Filling NetMF into the SIP framework is relatively the most complicated instantiation in this section, since NetMF algorithm consists of two decomposition steps. To be specific, the four main procedures of NetMF for a large window size \cite{qiu2018network} are: 
\begin{itemize}
    \item [(1)] Eigen-decompose $\bm{D}_0^{-\frac{1}{2}}\bm{A}_0\bm{D}_0^{-\frac{1}{2}}$ with its first $h$ eigenvectors and eigenvalues $\bm{U}_h \bm{\Lambda}_h \bm{U}_h^{\text{T}}$;
    \item [(2)] Construct an approximation matrix\\
    $\hat{\bm{M}}_0 = \frac{vol(\mathcal{G})}{b}\bm{D}_0^{-\frac{1}{2}}\bm{U}_h(\frac{1}{T}\sum_{r=1}^{\text{T}}\bm{\Lambda}_h^r)\bm{U}_h^{\text{T}}\bm{D}_0^{-\frac{1}{2}}$, where $vol(\mathcal{G}) = \sum_i\sum_j \bm{A}_0\left[i,j\right]$ is the volume of graph $\mathcal{G}$ and $T \& b$ are the context window size and the number of negative sampling in skip-gram respectively;
    \item [(3)] Compute $\hat{\bm{M}}^{'}_0 = max(\hat{\bm{M}}_0,1)$ to avoid numerical instability;
    \item [(4)] Approximate $log \hat{\bm{M}}^{'}_0 = \bm{U}_d \bm{\Sigma}_d \bm{V}_d^{T}$ using SVD, and return $\bm{U}_d \sqrt{\bm{\Sigma}_d}$ as network embedding.
\end{itemize}
Therefore, when a new node arrives with its connection information appears in $\bm{A}$, we are supposed to firstly generate its presentation corresponding to $\bm{U}_h$, then use it as a part to construct the new row vectors of $\hat{\bm{M}}_0^{'}$, then project this new target vector on the original basis. The deduction can be formulized as follows.\par

At first, denoting $\bm{H}_0 = \bm{D}_0^{-\frac{1}{2}}\bm{A}_0\bm{D}_0^{-\frac{1}{2}} \approx \bm{U}_h \bm{\Lambda}_h \bm{U}_h^{\text{T}}$, then, with a new node arriving, it becomes
\begin{equation}
    \bm{H}_1 = \left[
    \begin{matrix}
        \bm{H}_0 && \boldsymbol{\beta}^\text{T} \\
        \boldsymbol{\beta} && 0
    \end{matrix}
    \right] \approx \left[
    \begin{matrix}
        \bm{U}_h \\
        \boldsymbol{u}
    \end{matrix}
    \right] \bm{\Lambda}_h \left[
    \begin{matrix}
        \bm{U}_h^\text{T} && \boldsymbol{u}^\text{T} \\
    \end{matrix}  \right],
\end{equation}
if we recognize that the singular values kept unchanged. Thereafter, $\boldsymbol{u}\bm{\Lambda}_h \bm{U}_h^{\text{T}} = \boldsymbol{\beta}$ are supposed to be established. Thus $\boldsymbol{u} = \boldsymbol{\beta}\bm{U}_h\bm{\Lambda}_h^{-1}$ is the approximated vector, which can also be regarded as projecting $\boldsymbol{\beta}$ on the scaled basis $\bm{U}_h \bm{\Lambda}_h^{-1}$.\par

For the subsequent step, the target vector can be calculated follow the original equation as
\begin{equation}
    \hat{\boldsymbol{m}_0^{'}} = \frac{vol(\mathcal{G})}{b}d^{-\frac{1}{2}}\boldsymbol{u}(\frac{1}{T}\sum_{r=1}^{\text{T}}\bm{\Lambda}_h^r)\bm{U}_h^{\text{T}}\bm{D}_0^{-\frac{1}{2}},
\end{equation}
where $d$ is the degree between the new node and first $n$ vertices. After
\begin{equation}
    \boldsymbol{m} = log(max(\hat{\boldsymbol{m}}_0^{'},1))
\end{equation}
obtained, it can be project to the basis matrix $\bm{V}_0^{-1} = \bm{V}_d^{\text{T}} (\bm{\Sigma}_d)^{-\frac{1}{2}}$ to get the new embedding
\begin{equation}
    \boldsymbol{b}_{new} = \boldsymbol{m}\bm{B}_0 = \boldsymbol{m}\bm{V}_d(\bm{\Sigma}_d)^{-\frac{1}{2}}.
\end{equation}
It is also feasible when nodes come in group, under which circumstance we only need to treat vector $\boldsymbol{\beta}$ as a matrix. The SIP-NetMF algorithm that not constraint to one node is presented as follows.

\begin{algorithm}
\caption{SIP-NetMF} 
\label{SIP-NetMF-ALG}
\begin{algorithmic}
\REQUIRE Original parameter $vol(\mathcal{G}) \& b$, original eigen decomposition results $\bm{U}_h,\ \bm{\Lambda}_h, \ \bm{V}_d, \ \bm{\Sigma}_d$, space drift check state $C$, new adjacent matrix $\bm{A}_1 \in \mathbb{R}^{(n+m) \times (n+m)}$.
\ENSURE Embedding for new nodes $\bm{B}_{new}$.
\IF{$C = 0$ or ($C=1$ and equation (\ref{judgement_equation}) satisfied)}
    \STATE Let $\bm{D}_0$ and $\bm{D}$ be the degree diagonal matrix of first $n$ matrix and $m$ new nodes respectively
    \STATE Let $\bm{A}_{new} = \bm{A}_1\left[(n+1):(n+m),0:n \right]$ 
    \STATE Compute \\
    $\hat{\bm{M}} = \frac{vol(\mathcal{G})}{b} \bm{D}^{-1} \bm{A}_{new} \bm{D}_0^{-\frac{1}{2}} \bm{U}_h \bm{\Lambda}_h^{-1}(\frac{1}{T}\sum_{r=1}^{\text{T}}\bm{\Lambda}_h^r)\bm{U}_h^{\text{T}}\bm{D}_0^{-\frac{1}{2}}$
    \STATE $\hat{\bm{M}} = log(max(\hat{\bm{M}}),1)$
    \STATE Return $\bm{B}_{new} = \hat{\bm{M}} \bm{V}_d \bm{\Sigma}_d^{-\frac{1}{2}}$
\ELSE 
    \STATE Using NetMF to retrain embedding for $\bm{A}_1$
    \STATE Return $\bm{B}_{new}$
\ENDIF 
\end{algorithmic}
\end{algorithm}

\section{Experiment}
\subsection{Experimental Settings}

\begin{figure*}[htbp]
    \centering
    \includegraphics[width=\linewidth]{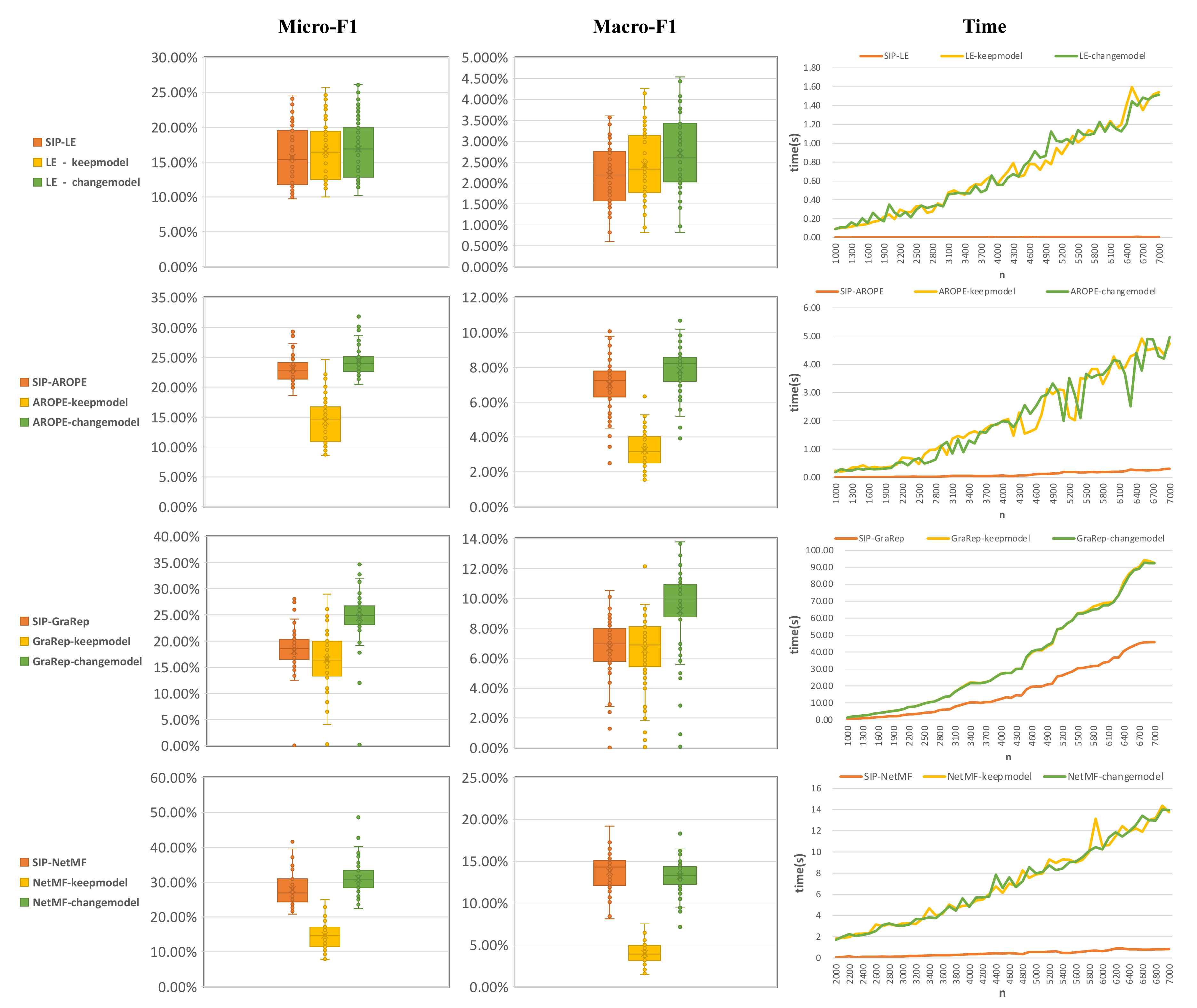}
    \caption{Classification performance of new arriving nodes on blogcatalog dataset, using LE, AROPE, GraRep and NetMF as well as their combination with SIP. For each method, the micro-F1 and macro-F1 results of different $n$ are integrated to draw box plots. The time for embedding generation is presented as line charts.}
    \label{experiment_fig}
\end{figure*}

In this study, as outlined in Section \ref{empirical restart}, we utilized three datasets that contain node labels to assess the performance of different SIP instantiations. The first dataset, Blogcatalog, is a social network that captures the relationships between bloggers and their corresponding interests are used as labels. The second dataset, PPI, represents a protein sub-network of Homo Sapiens, where the labels denote the biological states of each protein. Finally, Flickr is also a social network that reflects user contacts in Flickr with their interests serving as labels, where the first $20,000$ nodes are used. After removing nodes that were not part of the giant connected component in all three datasets, we were left with $9631$, $2591$, and $18813$ nodes for Blogcatalog, PPI, and Flickr respectively. To evaluate the efficacy of SIP on these datasets, we combined four static embedding methodologies- LE, AROPE, GraRep, and NetMF- with SIP.

This paper focus on the node classification task, aiming to quantify the ability of SIP to construct embeddings for new arriving nodes. Denoting the original network as $\mathcal{G}$, then, different initial nodes sizes $n$ are chosen to construct $\mathcal{G}_0$ for the static embedding, and the logistic classification is chosen as the downstream model. For the PPI dataset, $n$ varies from $500$ to $2500$, while the other two datasets choose $n$ from $1000$ to $7000$, with increment step equals to $100$ for all of them.\par

For each $\mathcal{G}$ and $n$, $m_0$ is calculated and three verification modes are applied. The first mode uses the embedding generated by SIP and keeps the models unchanged. The second mode generates $m_0$ new embeddings by applying the corresponding static embedding to $\mathcal{G}_1$ with $n+m_0$ nodes, with the models unchanged as well. The third mode involves retraining both the embedding and downstream model, which is considered as the ground truth result for the node classification effect. It is worth noting that for NetMF, $m_0$ is relatively small, and it equals zero for $n < 1000$ in the PPI dataset and for all $n$ in the Flickr dataset. Therefore, the validation of SIP-NetMF in PPI starts from $n = 1000$, while in Flickr, $m_0$ is calculated using $\bm{A}$ as the target matrix.\par

The evaluation metrics used in this study are micro-F1 and macro-F1, which are used to reflect the classification results. The running time for embedding acquisition is also taken into account to indicate time complexity. All the hyperparameters within the methods are consistent with the code released by each original paper.\par

\subsection{Experimental results}

\begin{table*}[h]
\renewcommand\arraystretch{2}
    \caption{The running time of generating embedding for new nodes, as well as the classification performance for these nodes are presented. 
    SIP combined with LE, AROPE, GraRep and NetMF are presented, along with the original performance of these methods under the pre-trained model and the retrained model.}
    \centering
    \begin{tabular}{p{1.6cm}|p{1.45cm}  p{1.45cm}  p{1.45cm}|p{1.45cm}  p{1.45cm}  p{1.45cm} |p{1cm} p{1cm}  p{1cm} }
    
     \hline
      & \multicolumn{3}{|c|}{\textbf{Micro - F1 (\%)}} & \multicolumn{3}{|c|}{\textbf{Macro - F1 (\%)}}  & \multicolumn{3}{|c}{\textbf{Time (s)}}\\
     \hline\hline
     \rule{0pt}{7pt}
      & Blog & PPI & Flickr & Blog & PPI & Flickr & Blog & PPI & Flickr \\
      \hline
      \textbf{SIP-LE }&$15.72\pm0.04 $&	$5.68\pm0.04 $&	$18.84 \pm0.06 $&	$2.19\pm0.01 $&	$1.17\pm0.01 $&	$0.50\pm0.00 $&	$0.00 $&	$0.00 $&	$0.00 $\\
      \textbf{LE-keepmodel}&$16.53 \pm0.04 $&	$7.96 \pm0.03 $&	$19.26 \pm0.06 $&	$2.44 \pm0.01 $&	$1.73 \pm0.01 $&	$0.55 \pm0.00 $&	$0.75$&	$0.16 $&	$0.44$ \\
      \textbf{LE-changemodel}&$16.88 \pm0.04 $&	$8.95 \pm0.03 $&	$19.48 \pm0.06 $&	$2.73 \pm0.01 $&	$2.30 \pm0.01 $&	$0.60 \pm0.00 $&	$0.75$&	$0.14$&	$0.45$ \\
     \hline
     \textbf{Ratio}&$\textbf{93.16 \%}$&	$\textbf{63.38 \%}$&	$\textbf{96.73 \%}$&	$\textbf{80.33 \%}$&	$\textbf{50.92 \%}$&	$\textbf{83.26 \%}$&	$\textbf{0.43 \%}$&	$\textbf{0.66 \%}$&	$\textbf{0.38 \%}$\\
     \hline\hline
    \textbf{SIP-AROPE}& $23.11 \pm2.34 $&	$14.84 \pm3.87 $&	$26.83 \pm3.60 $&	$7.11 \pm1.41 $&	$7.96 \pm2.12 $&	$3.74 \pm1.22 $&	$0.12$&	$0.01$&	$0.05$\\
    \textbf{AROPE-keepmodel}& $14.30 \pm3.72 $&	$8.25 \pm2.93 $&	$19.43 \pm3.45 $&	$3.19 \pm1.04 $&	$3.71 \pm1.92 $&	$1.34 \pm0.80 $&	$2.27$&	$0.37$&	$2.07 $\\
    \textbf{AROPE-changemodel}& $24.43 \pm2.40 $&	$14.94 \pm3.77 $&	$27.18 \pm3.44 $&	$7.93 \pm1.26 $&	$8.38 \pm2.08 $&	$3.99 \pm1.19 $&	$2.23$&	$0.34$&	$2.11$\\
    \hline
    \textbf{Ratio}&$\textbf{94.60 \%}$&	$\textbf{99.33 \%}$&	$\textbf{98.70 \%}$&	$\textbf{89.69 \%}$&	$\textbf{95.02 \%}$&	$\textbf{93.71 \%}$&	$\textbf{5.19 \%}$&	$\textbf{3.09 \%}$&	$\textbf{2.60 \%}$\\
     \hline\hline
     \textbf{SIP-GraRep}& $18.58 \pm3.41 $&	$12.87 \pm5.18 $&	$22.57 \pm4.75 $&	$6.80 \pm1.74 $&	$7.40 \pm3.07 $&	$5.22 \pm1.56 $&	$22.35 $&	$1.73 $&	$17.72 $\\
     \textbf{GraRep-keepmodel}& $16.04 \pm4.73 $&	$9.88 \pm5.55 $&	$5.75 \pm4.29 $&	$6.54 \pm2.19 $&	$5.42 \pm3.10 $&	$1.46 \pm1.26 $&	$53.64 $&	$4.62 $&	$39.04 $\\
     \textbf{GraReP-changemodel}&$25.35 \pm3.28 $&	$14.65 \pm4.24 $&	$24.83 \pm4.22 $&	$9.56 \pm2.28 $&	$8.70 \pm2.95 $&	$6.30 \pm1.64 $&	$53.01 $&	$4.59 $&	$39.16 $\\
     \hline
     \textbf{Ratio}& $\textbf{73.28 \%}$&	$\textbf{87.87 \%}$&	$\textbf{90.90 \%}$&	$\textbf{71.12 \%}$&	$\textbf{85.13 \%}$&	$\textbf{82.87 \%}$&	$\textbf{42.16 \%}$&	$\textbf{37.73 \%}$&	$\textbf{45.24 \%}$\\
     \hline\hline
     \textbf{SIP-NetMF}& $27.84 \pm4.48 $&	$19.56 \pm6.72 $&	$26.12 \pm7.04 $&	$13.87 \pm1.96 $&	$11.12 \pm5.02 $&	$3.40 \pm1.67 $&	$0.48$&	$0.11$&	$0.20$\\
     \textbf{NetMF-keepmodel}& $14.74 \pm3.96 $&	$7.51 \pm5.05 $&	$12.69 \pm6.25 $&	$4.09 \pm1.38 $&	$4.06 \pm2.97 $&	$0.82 \pm0.68 $&	$7.72$&	$2.07$&	$7.09 $\\
     \textbf{NetMF-changemodel}&  $31.09 \pm4.81 $&	$22.33 \pm4.20 $&	$28.06 \pm6.69 $&	$13.31 \pm1.62 $&	$12.55 \pm4.25 $&	$3.20 \pm1.63 $&	$7.68$&	$1.98$&	$6.98$\\
     \hline
     \textbf{Ratio}& $\textbf{89.55 \%}$&	$\textbf{87.61 \%}$&	$\textbf{93.09 \%}$&	$\textbf{104.19 \%}$&	$\textbf{88.60 \%}$&	$\textbf{106.22\%}$&	$\textbf{6.29 \%}$&	$\textbf{5.53 \%}$&	$\textbf{2.85 \%}$\\
     \hline\hline 
     \multicolumn{10}{c}{}
    \end{tabular}
    
    \label{experiment table}
\end{table*}
Fig.\ref{experiment_fig} is drawn based on the experimental results of blogcatalog dataset, which gives an intuitive illustration of the SIP effect. Based on the classification results under different $n$, the box plots of micro-F1 and macro-F1 using different methods are presented, and the time spent on embedding acquirement with the increase of $n$ are plotted in line graphs. As for the results, firstly, nearly all four SIP results are superior than the keep model scheme, whose new embeddings are acquired through retraining. An exception is SIP-LE, while under this circumstance SIP results are still close to the ground truth, thus it is still acceptable considering the much higher time efficiency. Therefore, in streaming setting, when the downstream model tends to be constant, i.e. considering the difficulty or infeasibility of model adjustment, SIP is a fast and accurate approach to obtain new embeddings compared with retraining. Moreover, even when the downstream model is adjusted or even retrained, in most cases classification results using SIP is still close to the ground truth, while the time consumption is much lower than retraining. In this experiment, the time used for retraining model is not take into account since fitting a logistic classification is efficient under this data size. While in real applications where millions of samples are considered, and millions of parameters are involved, the time saved for model revising is nonnegligible.\par

More detailed numerical results are presented in Table \ref{experiment table}, where all results for four methods in three datasets are reported. For micro-F1 and macro-F1, their averaged performance and standard errors are shown. We also add a new creteria ``\textbf{Ratio}'', which reflects the performance of SIP compared with the ground truth. For example, for micro-F1 metirc of LE, ratio equals to the micro-F1 result of SIP-LE divided by the result of LE. Therefore, a high ratio for classification performance and a low ratio for time is desired. The results show that for all four instantiations, SIP combined with them are able to achieve approximately $90 \%$ classification accuracy compared with the ground truth effect, while the running time is much lower. Even for SIP-GraRep, the most time consuming scheme, it still only require less than half time to obtain up to $90 \%$ accuracy. Therefore, not only the efficacy and efficiency, but also the general adaptation of SIP are empirically proved.

\section{Conclusion}
In this paper, we tackle the challenge of quickly generating embeddings when new vertices are added to networks. We begin by examining the changes in the embedding space that occur as a result of changes in the network structure, and propose a method for measuring this "space drift." With this method, we can determine the threshold number of nodes at which both the embedding and downstream models must be restarted for a specific network. Within this threshold, we introduce the SIP framework for generating new embeddings, which utilizes four initialization methods: LE, AROPE, GraRep, and NetMF. The time complexity of the SIP framework is linear in relation to network size. We perform empirical studies on three datasets, which demonstrate the superiority of SIP for node classification tasks.\par

Due to space constraints, this paper does not provide a comprehensive discussion of all details pertaining to the embedding generation procedure. Specifically, although the algorithms presented in this paper utilize matrix multiplication for embedding generation, we briefly highlighted the potential of parallel computation as a more efficient alternative, especially in the streaming setting. Additionally, while $m_0$, as proposed in this paper, serves as an adequate restart timing for both the basis and downstream models, our empirical findings suggest that embedding generation is still superior even when $m$, the number of new nodes, slightly exceeds $m_0$. Consequently, short-term continuous generation remains a practical option when equation (\ref{judgement_equation}) no longer satisfies the aforementioned conditions.\par

In future work, SIP could potentially be integrated with node update methods to accommodate various scenarios, including link prediction, among others. Furthermore, the classifier can be modified for various downstream tasks.

\section*{Acknowledgment}

The authors are grateful for the financial support from the National Natural Science Foundation of China (Grant Nos. 72021001 and 71871006).

\ifCLASSOPTIONcaptionsoff
  \newpage
\fi



\bibliographystyle{IEEEtran}
\bibliography{ieee}
%

%


\begin{IEEEbiography}[{\includegraphics[width=1in,height=1.25in,clip,keepaspectratio]{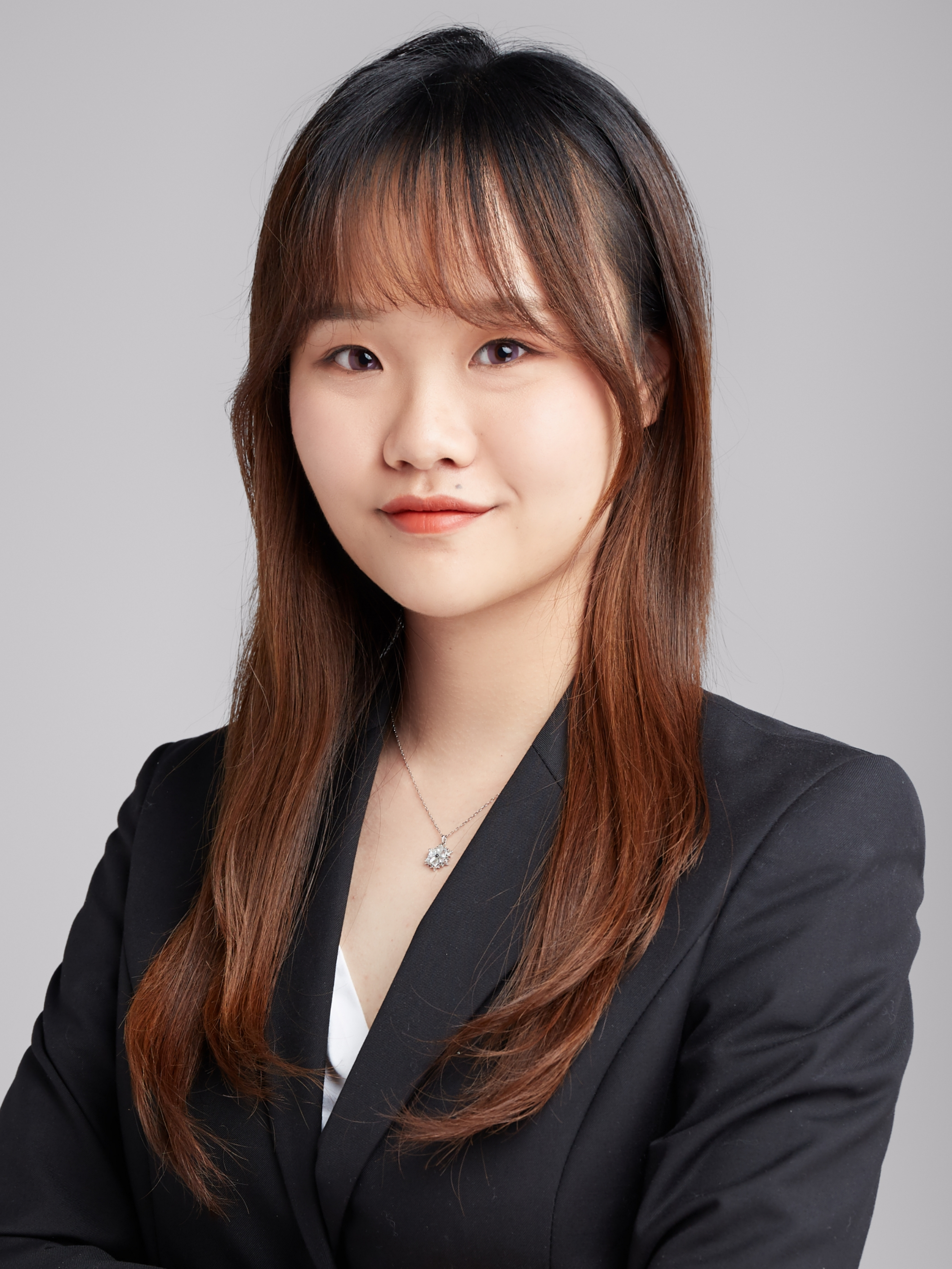}}]{Yanwen Zhang}
Yanwen Zhang received her B.E. degree from Beihang University in 2020. She is currently pursuing the Ph.D degree with Economics and Management, Beihang University. Her research interests include matrix theory and its applications in network analysis.
\end{IEEEbiography}

\begin{IEEEbiography}[{\includegraphics[width=1in,height=1.25in,clip,keepaspectratio]{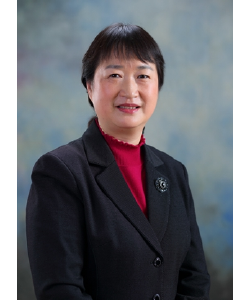}}]{Huiwen Wang}
Huiwen Wang received the B.S. degree from Beihang University in 1982, DEA from Paris Dauphine University in 1989 and Ph.D degree from Beihang University in 1992. She is currently a professor at the School of Economics and Management, Beihang University.
\end{IEEEbiography}

\begin{IEEEbiography}[{\includegraphics[width=1in,height=1.25in,clip,keepaspectratio]{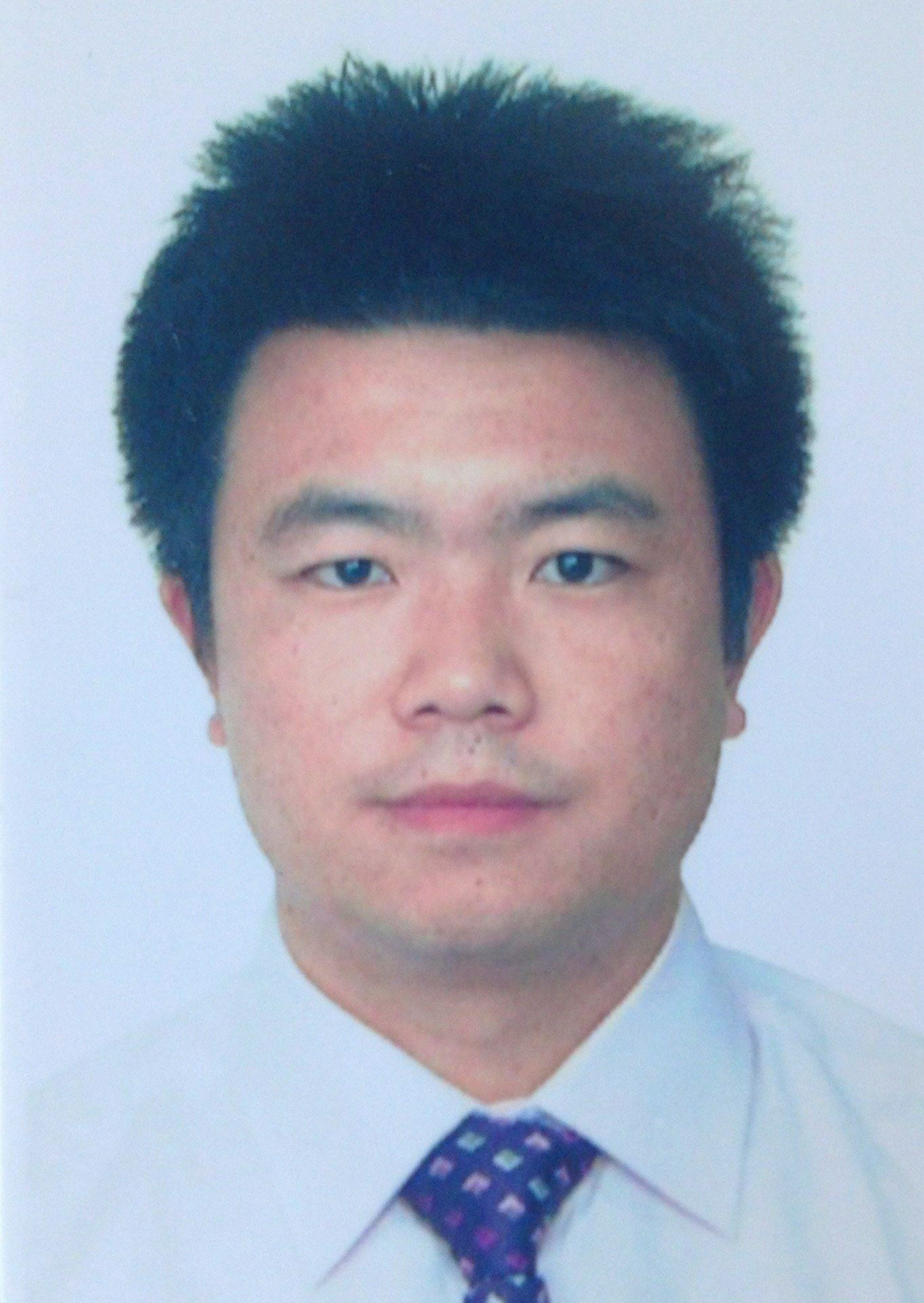}}]{Jichang Zhao}
Jichang Zhao received his B.E. and Ph.D. degrees from
Beihang University in 2008 and 2014. He is currently an associate professor at the School of Economics and Management, Beihang University. His research interests include computational social science and complex systems.
\end{IEEEbiography}





\end{document}